\newcounter{ifsup}
\theoremstyle{remark}
\newtheorem{theorem}{Theorem}
\newtheorem{athm}{Theorem}[section]
\newtheorem{lemma}[athm]{Lemma}
\newtheorem{prop}[athm]{Proposition}
\newtheorem{coro}[athm]{Corollary}
\newtheorem{obs}[athm]{Observation}
\newtheorem{prty}[athm]{Property}
\newtheorem*{definition}{Definition}
\newtheorem{conj}{Conjecture}
\newcommand*{\eins}{\ensuremath{\mathbbm 1}}
\newcommand*{\bbC}{\mathbb{C}}
\newcommand*{\cA}{\mathrm{A}}
\newcommand*{\cE}{\mathcal{E}}
\newcommand*{\cR}{\mathcal{R}}
\newcommand*{\cS}{\mathrm{S}}
\newcommand*{\ket}[1]{\left|#1\right\rangle}
\newcommand*{\bra}[1]{\left\langle #1\right|}
\newcommand*{\proj}[1]{\ket{#1}\bra{#1}}
\newcommand*{\Tr}{\mathsf{Tr}}
\newcommand*{\fr}[2]{\frac{#1}{#2}}
\newcommand{\nrm}[1]{\left\|#1\right\|}
\newcommand{\abs}[1]{\left|#1\right|}
\newcommand{\vc}[1]{\boldsymbol{#1}}
\newcommand\dsm{\mathrel{\stackrel{\makebox[0pt]{\mbox{\normalfont\tiny UT}}}{\succ}}}
\newcommand\thm{\mathrel{\stackrel{\makebox[0pt]{\mbox{\normalfont\tiny Th}}}{\succ}}}
\newcommand{\be}{\begin{equation}}
\newcommand{\ee}{\end{equation}}
\renewcommand\thesection{\arabic{section}}
\begin{document}
\title{Low-temperature thermodynamics with quantum coherence}
\date{\today}
\author{Varun Narasimhachar}
\email{vnarasim@ucalgary.ca}
\author{Gilad Gour}
\address{Department of Mathematics and Statistics and Institute for Quantum Science and Technology, University of Calgary, 2500 University Drive NW, Calgary, Alberta, Canada T2N 1N4}

\begin{abstract}
Thermal operations are an operational model of non-equilibrium quantum thermodynamics. In the absence of coherence between energy levels, exact state transition conditions under thermal operations are known in terms of a mathematical relation called thermo-majorization. But incorporating coherence has turned out to be challenging, even under the relatively tractable model wherein all Gibbs state-preserving quantum channels are included. Here we find a mathematical generalization of thermal operations at low temperatures, ``cooling maps'', for which we derive the necessary and sufficient state transition condition. Cooling maps that saturate recently-discovered bounds on coherence transfer are realizable as thermal operations, motivating us to conjecture that all cooling maps are thermal operations. Cooling maps, though a less conservative generalization to thermal operations, are more tractable than Gibbs-preserving operations, suggesting that cooling map-like models at general temperatures could be of use in gaining insight about thermal operations.
\end{abstract}

\maketitle
\section{Introduction}
Advancements in cryogenics have enabled us to prepare systems at very low temperatures using various cooling techniques \cite{Refr,Ncoo,Fcoo}. In fact, humans may soon cool systems to levels that are not known to exist anywhere in the observable universe! Low-temperature systems exhibit exotic, characteristically quantum phenomena such as the quantum hall effect, superconductivity, and topological order \cite{QHE,Scon,Tord}, enabling diverse technological applications such as precision measurement instruments \cite{KK03,thme}, fast digital electronics \cite{DS10}, and NMR applications \cite{NMR}. One of the biggest potential applications is quantum computing: several of the proposed implementations of quantum computing are currently dependent on low-temperature capability \cite{QC2,QC,QC3,Flqu}. In addition, low-temperature systems are useful in fundamental research frontiers such as particle physics \cite{Acc} and dark matter detection \cite{DM}.

The prevalence of such phenomena at low temperatures is related to the fact that coherence can better endure thermal noise at low temperatures \cite{CoCoh}. On the other hand, significant strides have been made in realizing coherent quantum phenomena at higher temperatures \cite{highTc}. These developments mean that more and more experimentally realizable systems exhibit effectively ``low-temperature-like'' behavior at temperatures that are no longer forbiddingly low.

Our ability to control and manipulate physical systems in either of these cases---actual or effective low-temperature settings---hinges on our understanding of the thermodynamics of low-temperature environments. While classical thermodynamics is an adequate tool for analyzing macroscopic systems in thermodynamic equilibrium, it proves inadequate in any situation involving microscopic quantum systems or thermodynamic non-equilibrium. There has been extensive interest in formulating a theory of thermodynamics applicable to such situations \cite{janz,smarf,nege,qlan,Aab,Nan,Reth,Sec,NU,catcoh,SSP14,CohRE,MJR14,CorW,Coh,CST}. Most of these works, especially the recent ones, have studied thermodynamical processes using a model called ``thermal operations'', which are defined operationally as processes realizable by coupling a system with a heat reservoir and carrying out a global energy-conserving unitary evolution. However, existing formulations have not been able to fully incorporate quantum coherence---the essential aspect of quantum physics that is represented in the iconic ``Schr\"odinger's cat'' thought experiment. While coherence becomes irrelevant in the special case where the Hamiltonian of a system is fully degenerate \cite{NU}, it is essential to understanding the thermodynamics of general systems. Moreover, coherence is a resource, helpful both in thermodynamic tasks such as work extraction \cite{catcoh,CorW,SSP14} and in other resource-based tasks such as reference frame alignment \cite{BRS07}. A recent surge of work in the field has made progress in understanding the role of coherence in thermodynamics \cite{CohRE,MJR14,CorW,Coh,CST}.

In \cite{Coh}, the authors find an upper bound to the extent to which coherence can be preserved under thermal operations. In \cite{CST}, progress beyond such bounds has been made, but exact state transition conditions remain elusive. A possible strategy to gain further insight is to consider a set of processes beyond thermal operations, namely all quantum channels that preserve the Gibbs state. However, it is not clear if this expanded set is physically motivated, because it is defined mathematically, rather than operationally.

In this paper we report a mathematical generalization of coherent thermal operations at low temperatures, that we call the ``cooling maps'' model. At temperatures low enough for the ambient bath to be approximately in its ground state, thermal operations have the effect of taking away heat from the system of interest, therefore cooling the system. This motivates our definition of ``cooling maps''. We find the necessary and sufficient condition for state transitions to be feasible under these maps. We construct thermal operation implementations for cooling maps that saturate the coherence transfer bounds of Ref.~\cite{Coh}, opening up the possibility of improvements in coherence-based tasks. Our work also sheds light on the relations between different models that could be used to study low-temperature quantum thermodynamics: Thermal operations, Gibbs-preserving operations, and our cooling maps (see Fig.~\ref{fhier}). In general, cooling maps are potentially less conservative than thermal operations, in the sense that processes that are forbidden under thermal operations could be allowed under cooling maps. However, we demonstrate that the latter are much more conservative than Gibbs-preserving operations. Although the cooling maps model emerges from the low-temperature limit, the methods used in our work could potentially lead to a better understanding of coherence in quantum thermodynamics at all temperatures, in conjunction with the methods and results from other recent works that address this subject.
\begin{figure}
    \includegraphics[width=.9\columnwidth]{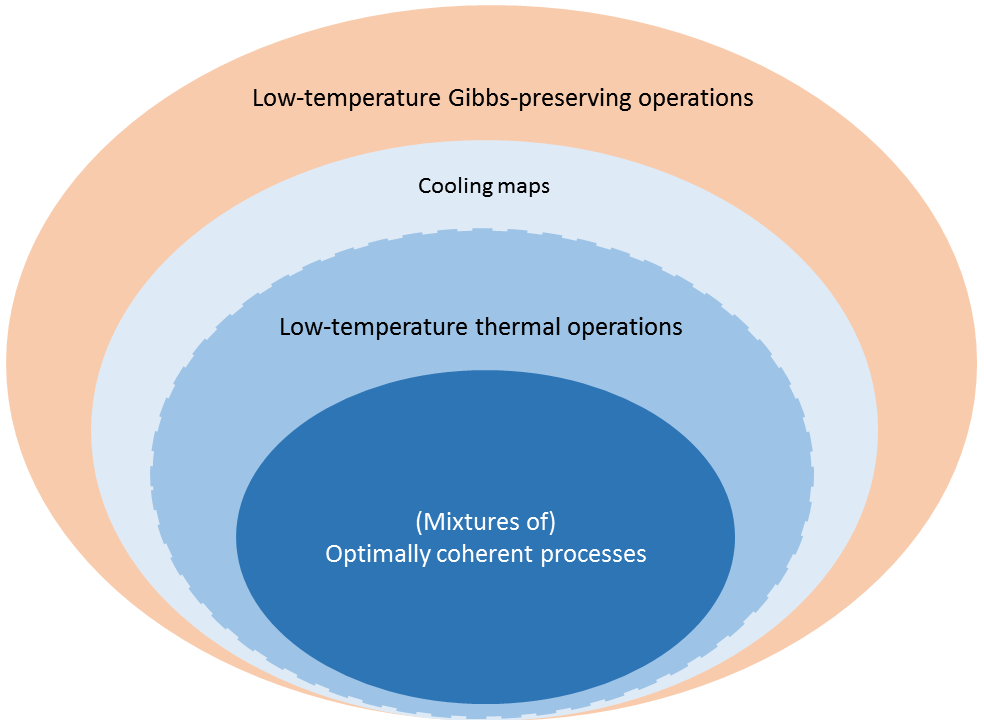}
    \caption[justification=justified]{[Inclusion hierarchy of thermodynamic models] In this work we introduce the cooling maps as a generalization of low-temperature thermal operations, and the dashed boundary between the two sets indicates that the sets of state transitions they admit might coincide. Thermal operations include cooling maps that optimally preserve coherences.}
    \label{fhier}
\end{figure}
\section{Results}

\subsection*{Background: thermal operations}
The physical setting in our model is a $d$-dimensional quantum system $\cS$ whose free Hamiltonian is $H_\cS$. For convenience, we make some simplifying assumptions on $H_\cS$. Firstly, that $H_\cS$ has no degenerate energy levels. Thus, its energy spectrum has the structure
\be E_1<E_2<\dots<E_d.\ee
We also assume that $E_i-E_j\ne E_k-E_l$ for any two pairs of indices $(i,j)$ and $(k,l)$, except when either $i=j$ and $k=l$, or $i=k$ and $j=l$.

Note that these assumptions are satisfied for almost all Hamiltonians, in a statistical sense: the subset of matrices that fail to satisfy these assumptions is of measure zero in the set of all Hermitian matrices. One may dismiss this measure-theoretic argument on the grounds that Hamiltonians of typical naturally occurring systems, such as atoms, have degenerate levels and gaps. But these degeneracies can be broken with the slightest perturbation, such as an external electromagnetic field. The absence of any such perturbation is in fact an exceptional circumstance, and it is reasonable to suppose that the above assumptions are satisfied by most realistic physical systems. Moreover, certain physical systems that are used in applications have these properties. For example, different types of superconducting artificial atoms used in quantum computing implementations, such as Cooper-pair boxes and transmons, are governed by anharmonic-oscillator{\textendash}like Hamiltonians \cite{PhQd}. Another important point to consider about these assumptions is the scope of their impact on our results. For systems that do satisfy these assumptions, the state transition conditions that we will derive turn out to be necessary and sufficient. However, even for systems that fail to satisfy these assumptions, our conditions remain sufficient, only losing their necessity. Furthermore, our results on maximally coherent processes hold regardless of these assumptions.

The non-degeneracy of all energy levels of $H_\cS$ implies that we can label the eigenvectors (stationary states) using just one label, as in $\ket{E_j}$. If $\cS$ is isolated, its dynamics is governed by the Schr\"odinger equation under $H_\cS$. If, instead, it is capable of exchanging heat with a thermal reservoir (heat bath) at temperature $T$, then $\cS$ eventually ``equilibrates'', i.e. approaches the state of thermal equilibrium with the reservoir, regardless of its initial state. The equilibrium state is given by the so-called Gibbs state
\be\gamma_\cS:=\fr1{Z_\cS}\exp\left(-\beta H_\cS\right)=\sum_{j=1}^d\fr{\exp\left(-\beta E_j\right)}{Z_\cS}\proj{E_j},\ee
where $\beta=(k_BT)^{-1}$ with $k_B$ the Boltzmann constant, and $Z_\cS:=\sum_{j=1}^d\exp\left(-\beta E_j\right)$ the partition function of $\cS$.

Quantum thermodynamics enables us to go beyond just this asymptotic description and to determine what processes can occur in the course of equilibration. If the bath is ``large'' enough, every possible physical process occurring on the system $\cS$ can be modeled through the following stepwise operational form:
\begin{enumerate}
\item Bring $\cS$ (which is initially isolated) together with an arbitrary ancillary system $\cA$, which is prepared in its own Gibbs state $\gamma_\cA:=\left(1/Z_\cA\right)\exp\left(-\beta H_\cA\right)$ corresponding to its own free Hamiltonian $H_\cA$ and the ambient temperature $T$. Physically, the ancilla is all or part of the heat bath.
\item Perform any global energy-conserving unitary evolution $U$ on the composite $\cS\cA$. Energy conservation is imposed through the commutator relation $[U,H_{\cS\cA}]:=0$, where $H_{\cS\cA}$ is the Hamiltonian that governs uncoupled evolution of the composite system $\cS\cA$:
\be H_{\cS\cA}:=H_\cS\otimes\eins_\cA+\eins_\cS\otimes H_\cA.\ee
\item Discard the ancilla $\cA$ (i.e., isolate $\cS$ again).
\end{enumerate}
Mathematically, the process is represented by a completely positive trace-preserving map $\cE$ whose action on an arbitrary state $\rho$ of $\cS$ is given by
\be\rho\mapsto\cE(\rho)=\Tr_\cA\left[U\left(\rho\otimes\gamma_\cA\right)U^\dagger\right],\ee
where $\Tr_\cA$ is the mathematical operation of partial trace with respect to $\cA$, corresponding to the physical operation of discarding the system $\cA$.

Processes modeled in this manner have been called thermal operations in the literature (see Supplementary Section~\ref{sther} for details). The energy conservation condition $[U,H_{\cS\cA}]:=0$ can be understood in terms of the eigenvalues and eigenvectors of $H_{\cS\cA}$: If $\{G_j\}$ are the eigenvalues of $H_{\cS\cA}$, and $\ket{G_j;\alpha}$ represents an eigenvector belonging to $G_j$ (where $\alpha$ could be a label identifying eigenstates within a degenerate energy level), then we require
\begin{equation}
\bra{G_j;\alpha}U\ket{G_k;\beta}=0\;\forall G_j\ne G_k.
\end{equation}
The uncoupled structure of $H_{\cS\cA}$ means that its energy levels have the form $G=E+F$, where $E$ and $F$ are eigenvalues of $H_\cS$ and $H_\cA$, respectively. A unitary such as $U$ can change the state of $\cS$ by raising (lowering) $E$ while simultaneously lowering (raising, respectively) $F$ so as to keep $G$ constant.

\subsection*{The emergence of ``cooling maps''}
When the ambient bath temperature is low enough, the initial state of any ancilla $\cA$ drawn from the bath (i.e., its Gibbs state) is almost entirely in its lowest energy level:
\begin{equation}\label{lowt}
\gamma_\cA\approx\fr1{g_1}\sum_{t=1}^{g_1}\proj{F_1;t},
\end{equation}
where $F_1$ is the ground state energy, $g_1$ the multiplicity of this energy level, and $t$ some label that identifies eigenvectors within the degenerate subspace. This means that even though the temperature is non-zero, the bath effectively behaves as though it were zero. Since the ancilla $\cA$ starts out in its lowest energy level, any energy transfer that an energy-conserving unitary $U$ causes between $\cS$ and $\cA$ must be from $\cS$ to $\cA$. Therefore, the effect of a low-temperature thermal operation on $\cS$ is to ``cool'' it. How low the temperature needs to be in order for this approximation to be valid is determined by the composition of the system and the bath (see Supplementary Section~\ref{slowt} for details). For example, a bath consisting of many identical systems in the same Gibbs state (i.e., of the form $\gamma^{\otimes n}$) would satisfy this approximation at temperatures much lower than the gap between the ground and first excitated state of each subsystem. In some cases one can infer this low-temperature behavior of the bath indirectly, through the behavior of the system of interest. For example, the ambient bath surrounding a superconducting artificial atom behaves effectively in this manner at temperatures lower than the superconducting critical temperature of the system. Under condition (\ref{lowt}), together with our assumption of non-degenerate energy levels and gaps in $H_\cS$, all thermal operations reduce to an elegant form, characterized by a Kraus operator sum representation with the following features: A number $n\le d$ of diagonal Kraus operators
\be K_i=\sum_{j=1}^d\lambda_j^{(i)}\proj{E_j},\ee
$i\in\{1\dots n\}$; and $d(d-1)/2$ Kraus operators of the form
\be J_{jk}=\mu_{jk}\ket{E_j}\bra{E_k},\ee
one for each pair $(j,k)$ with $j<k$. Note that some of the $J$'s could be zero. If we relax the non-degeneracy conditions on the system Hamiltonian, the form of these Kraus operators generalizes to the well known structure of amplitude-damping channels, which are used as a model of dissipation, spontaneous emission, etc. \cite{Flqu}. The $j<k$ condition in the $J_{jk}$'s captures the ``cooling'' action that results from the low-temperature assumption. This motivates us to call any process with such an operator sum representation a ``cooling maps''. A detailed derivation of this form may be found in Supplementary Section~\ref{scool}.

\subsection*{The action of cooling maps}
Let us denote by $\cE$ the channel realized by the above Kraus operators. The action of $\cE$ on the state of $\cS$ can be expressed succinctly if we group the $\lambda$'s into $d$ vectors of the form $\vc\lambda_j\equiv(\lambda_j^{(1)}\dots\lambda_j^{(n)})^T$. If $\rho$ is the initial state and $\sigma=\cE(\rho)$ the state after the application of $\cE$, then the relation between the off-diagonal elements of $\rho$ and $\sigma$ is simple:
\be\sigma_{jk}=\left\langle\vc\lambda_j,\vc\lambda_k\right\rangle\rho_{jk},\ee
for each $j\ne k$. Here $\langle\cdot,\cdot\rangle$ denotes the usual inner product between two vectors. On the other hand, the relation between the diagonal parts of the states is given by
\be\sigma_{jj}=\left\langle\vc\lambda_j,\vc\lambda_j\right\rangle\rho_{jj}+\sum_{k>j}\abs{\mu_{jk}}^2\rho_{kk}.\ee
The matrix $q$ whose components are the quantities $q_{jk}:=\left\langle\vc\lambda_j,\vc\lambda_k\right\rangle$ appearing above is called the Gramian of the collection $\{\vc\lambda_j\}$. Every Gramian matrix is positive-semidefinite, and conversely, every positive-semidefinite matrix is the Gramian of some collection of vectors \cite{HJ}.

If we view the diagonal $\vc u\equiv\left(\rho_{11}\dots\rho_{dd}\right)^T$ as a classical probability distribution, then its transformation under $\cE$ can be represented by the action of a matrix $P$:
\be\vc v\equiv\left(\sigma_{11}\dots\sigma_{dd}\right)^T=P\vc u,\ee
where the components of $P$ are given by
\be P_{j|k}=\left\{\begin{array}{ll}\left\langle\vc\lambda_j,\vc\lambda_j\right\rangle,&\textnormal{if }j=k;\\
\abs{\mu_{jk}}^2,&\textnormal{if }j\ne k.\end{array}\right.\ee
$P$ is upper-triangular: $P_{j|k}=0$ if $j>k$. Furthermore, it is column-stochastic: $P_{j|k}\ge0$  for all $(j,k)$; and $\sum_{j=1}^dP_{j|k}=1$ for all $k$.
\subsection*{Upper-triangular majorization}
In Supplementary Lemma~\ref{Pcon} we prove that the existence of an upper-triangular (UT) column-stochastic matrix $P$ such that $\vc v=P\vc u$ is in fact equivalent to the simultaneous fulfillment of the following $(d-1)$ inequalities:
\begin{align}
u_d&\ge v_d,\nonumber\\
u_{d-1}+u_d&\ge v_{d-1}+v_d,\nonumber\\
&\;\;\vdots\nonumber\\
u_2+u_3\dots+u_d&\ge v_2+v_3\dots+v_d.
\end{align}
We abbreviate the above inequalities collectively as $\vc u\dsm\vc v$, read ``$\vc u$ UT-majorizes $\vc v$''. In the literature, UT majorization has variously been referred to as ``unordered majorization'' \cite{SMOA} and ``majorization'' \cite{SUTM} (not to be confused with the more common established sense of the term ``majorization''), as well as the term we use \cite{SUTM2}. It is instructive to compare UT majorization with the so-called thermo-majorization, which governs the transformation of the diagonal elements in thermodynamics at general temperatures \cite{Nan}. The thermo-majorization relation between two probability distributions $\vc u$ and $\vc v$ can be defined in different ways, of which the following is perhaps most intuitive. Denote by $\vc u_\gamma$ the Gibbs distribution for the given Hamiltonian at some inverse temperature $\beta$. That is,
\begin{equation}
u_{\gamma,j}=\fr1{Z_\cS}e^{-\beta E_j}.
\end{equation}
Then we say that ``$\vc u$ thermo-majorizes $\vc v$'' if there exists a column-stochastic matrix $M$ such that $M\vc u_\gamma=\vc u_\gamma$, i.e., $M$ fixes the Gibbs distribution, and $M\vc u=\vc v$. Considering that UT stochastic matrices fix the zero-temperature limit of the Gibbs distribution for a non-degenerate Hamiltonian, it seems intuitively reasonable that UT majorization emerges as the zero-temperature limit of thermo-majorization. We show this rigorously in Supplementary Section~\ref{sreal}.

\subsection*{State transformation conditions}
The foregoing observations put together yield our main result: the necessary and sufficient condition for the feasibility of state transitions under cooling maps.
\begin{theorem}\label{thth}
Let $\rho$ and $\sigma$ be two states on $\cS$, arbitrary except that the matrix elements of $\rho$ are non-zero ($\rho_{jk}\ne0$). Define the matrix $Q$ as follows:
\be Q_{jk}=\left\{\begin{array}{ll}\min\left(\fr{\sigma_{jj}}{\rho_{jj}},1\right),&\textnormal{if }j=k;\\
\fr{\sigma_{jk}}{\rho_{jk}},&\textnormal{if }j\ne k.\end{array}\right.\ee
Then, the transition $\rho\mapsto\sigma$ is possible through a cooling map if and only if both the following conditions hold:
\begin{enumerate}
\item The diagonal parts $\vc u\equiv\left(\rho_{11}\dots\rho_{dd}\right)^T$ and $\vc v\equiv\left(\sigma_{11}\dots\sigma_{dd}\right)^T$ satisfy $\vc u\dsm\vc v$.
\item The matrix $Q$ is positive-semidefinite: $Q\ge0$.
\end{enumerate}
\end{theorem}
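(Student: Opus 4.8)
The plan is to prove the two directions separately, letting the Gramian carry the coherence (off-diagonal) constraints and the matrix $P$ carry the population (diagonal) constraints. For necessity, suppose a cooling map $\cE$ realizes $\rho\mapsto\sigma$, with Gramian $q_{jk}=\langle\vc\lambda_j,\vc\lambda_k\rangle$ and lowering weights $\abs{\mu_{jk}}^2$. The diagonal action gives $\vc v=P\vc u$ with $P$ upper-triangular column-stochastic, so Supplementary Lemma~\ref{Pcon} immediately yields condition~1, $\vc u\dsm\vc v$. For condition~2, I would read off from the off-diagonal action that $q_{jk}=\sigma_{jk}/\rho_{jk}=Q_{jk}$ for $j\ne k$ (here $\rho_{jk}\ne0$ is essential), while the diagonal action $\sigma_{jj}=q_{jj}\rho_{jj}+\sum_{k>j}\abs{\mu_{jk}}^2\rho_{kk}$ together with column-stochasticity forces $q_{jj}\le\sigma_{jj}/\rho_{jj}$ and $q_{jj}=P_{j|j}\le1$, hence $q_{jj}\le Q_{jj}$. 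Thus $Q=q+\mathrm{diag}(Q_{jj}-q_{jj})$ is the sum of the positive-semidefinite $q$ and a nonnegative diagonal, so $Q\ge0$.

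For sufficiency, the crucial observation is that the very same chain of inequalities pins down the construction. Any realizing $P$ must satisfy both $P_{j|j}u_j\le v_j$ (from $\vc v=P\vc u$ with nonnegative entries) and $P_{j|j}\le1$, so the diagonal is forced to $P_{j|j}=Q_{jj}=\min(\sigma_{jj}/\rho_{jj},1)$, while the coherence relations force $q_{jk}=Q_{jk}$ for $j\ne k$. I would therefore simply take the Gramian to be $q=Q$: by condition~2 this is positive-semidefinite, so a collection $\{\vc\lambda_j\}$ with $\langle\vc\lambda_j,\vc\lambda_k\rangle=Q_{jk}$ exists (e.g.\ via a square-root or Cholesky factorization), living in a space of dimension $n=\mathrm{rank}\,Q\le d$, as required of a cooling map. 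What remains is to supply lowering weights $\abs{\mu_{jk}}^2=P_{j|k}$ ($j<k$) realizing $\vc v=P\vc u$ with this already-fixed diagonal.

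This last step is a downward transportation problem and is where I expect the main difficulty. With $P_{j|j}u_j=\min(u_j,v_j)$ fixed, level $k$ must send its surplus $(u_k-v_k)_+$ strictly to lower levels and level $j$ must absorb its deficit $(v_j-u_j)_+$ strictly from higher levels. I would build this flow greedily from the top: sweep $k=d,d-1,\dots,1$, maintaining a pool equal to the accumulated net surplus $\sum_{i\ge k}(u_i-v_i)$, adding each level's surplus to it and drawing from it to fill each deficit. The amount crossing any cut between levels $m-1$ and $m$ equals $\sum_{i\ge m}(u_i-v_i)$, which condition~1 guarantees to be nonnegative; this is exactly why $\vc u\dsm\vc v$ is needed, certifying that the pool never goes negative and (using $\sum_i u_i=\sum_i v_i=1$) empties precisely at the last level. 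Any such feasible flow gives nonnegative $P_{j|k}$, and column-stochasticity follows from the retained-plus-sent bookkeeping.

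Finally I would assemble $\cE$ from the $K_i$ built out of $\{\vc\lambda_j\}$ and the $J_{jk}$ with $\abs{\mu_{jk}}^2=P_{j|k}$, and verify directly that it is a cooling map sending $\rho$ to $\sigma$: the identities $\sigma_{jk}=Q_{jk}\rho_{jk}$ and $\sigma_{jj}=(P\vc u)_j=v_j$ hold by construction, while trace preservation $\sum_iK_i^\dagger K_i+\sum_{j<k}J_{jk}^\dagger J_{jk}=\eins$ reduces to the column sums $q_{kk}+\sum_{j<k}\abs{\mu_{jk}}^2=1$. The conceptual crux—and the point at which the two hypotheses must be shown to cooperate rather than conflict—is the forced coincidence of the Gramian diagonal with the diagonal of $P$; once this is recognized, condition~2 dispatches the coherences and condition~1 dispatches the populations essentially independently, leaving only the feasibility of the downward flow as genuine work.
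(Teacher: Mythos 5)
Your proposal is correct and takes essentially the same route as the paper: the necessity argument (read off $q_{jk}=\sigma_{jk}/\rho_{jk}$ for $j\ne k$, bound $q_{jj}\le Q_{jj}$, and write $Q=q+D$ with $D\ge0$ diagonal) matches the paper's proof verbatim, and the sufficiency argument (set the Gramian equal to $Q$ and build an upper-triangular column-stochastic $P$ with diagonal $\min(\sigma_{jj}/\rho_{jj},1)$ via a top-down surplus/deficit flow) is exactly the paper's construction, except that you re-derive inline what the paper delegates to Supplementary Lemmas~\ref{Pcon} and~\ref{eqv}. The only imprecision is your claim that the diagonal of $P$ is ``forced'' to equal $Q_{jj}$ --- the constraints only force $P_{j|j}\le Q_{jj}$, and choosing equality is the (correct, maximal) choice rather than a necessity --- but this does not affect the validity of the construction.
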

The $Q$ appearing above is in fact a special limiting case of the Gramian matrix $q$ that we introduced earlier. Note that we can easily adapt the theorem to cases where some of the $\rho_{jk}$'s are zero. Also note that if we relax the non-degeneracy assumptions on $H_\cS$, the condition of this Theorem remains sufficient for state transitions; it is, however, no longer necessary. We provide the proof of this theorem, as well as technical details of the preceding discussion, in Supplementary Section~\ref{stran}.

\subsection*{Optimally coherent cooling maps are thermal}
We constructed the cooling maps based on the low-temperature limit of thermal operations. Since the latter link the mathematical model with actual physics, we must determine if the cooling and low-temperature thermal models are equivalent, or if instead there exist state transitions achievable by cooling maps but forbidden under thermal operations. A couple of special cases support the equivalence hypothesis.

The first special case is when $\cS$ is a two-level system, i.e., $d=2$, for which cooling maps are identical with thermal operations. This can be proved simply by constructing a thermal implementation of any cooling map (Supplementary Corollary~\ref{c2l}). The state-transition conditions for two-level systems under thermal operations at any temperature have been derived recently by \'Cwikli\'nski \textit{et al.} \cite{Coh}, and our result tallies with theirs in the low-temperature limit.

The other special case involves pairs of states $(\rho,\sigma)$ satisfying the first condition of Theorem~\ref{thth} and also
\be Q_{jk}=\left(Q_{jj}Q_{kk}\right)^{1/2}\ee
for all $(j,k)$. Then there is a thermal operation taking $\rho\mapsto\sigma$ (Supplementary Corollary~\ref{opco}). The significance of this special case is that each off-diagonal element (i.e., coherence between different energy levels) in $\sigma$ has the highest magnitude possible, in the following sense. Suppose that $\sigma'$ is a state whose diagonal coincides with that of $\sigma$. Then, if $\rho\mapsto\sigma'$ is possible via a cooling map, then it holds for all $(j,k)$ that $\abs{\sigma'_{jk}}\le\abs{\sigma_{jk}}$ (Supplementary Corollary~\ref{opopco}). This bound was also proved for all temperatures in Ref.~\cite{Coh}, whose authors constructed examples where the bound cannot be attained. Our results show that it is always attainable at low temperatures. The same conclusion is reached in Ref.~\cite{CST}, where the high-temperature case is also considered. More generally, we prove that any mixture of optimally coherent processes is a low-temperature thermal operation (Supplementary Corollary~\ref{mixt}). In fact, this holds even if the non-degeneracy assumptions on the Hamiltonian $H_\cS$ are relaxed.

\subsection*{Gibbs-preserving operations}
In general, the set of cooling maps could be larger than that of thermal operations. Whether the two sets are equivalent is an open problem. There is, however, an even larger set that includes both of these: all processes $\cE$ that preserve the Gibbs state $\gamma_\cS$. That is, $\cE(\gamma_\cS)=\gamma_\cS$.
These processes, called the ``Gibbs-preserving operations'', have been studied in the past as a possible model for thermodynamic processes. Even if one favors thermal operations as the physically more reasonable model, the Gibbs-preserving model can be studied as an approximation to thermal operations that is potentially more mathematically tractable. It is not hard to verify that all cooling maps are Gibbs-preserving. We model the low-temperature limit for the Gibbs-preserving operations through the approximation $\gamma_\cS\approx\proj{E_1}$ (Supplementary Section~\ref{glowt}). This form follows from our non-degeneracy assumption on the system Hamiltonian $H_\cS$. Note that we do not require $\cS$ to actually be in the Gibbs state; we merely require the ambient temperature to be low enough for the Gibbs state to be approximately equal to the ground state. Considering this approximate form of the Gibbs state, low-temperature Gibbs-preserving operations are processes $\cE$ such that
\be\label{gpo}\cE\left(\proj{E_1}\right)=\proj{E_1}.\ee

\subsection*{Monotones under Gibbs-preserving operations}
Clearly, the structure Eq.~\eqref{gpo} of Gibbs-preserving operations privileges the $E_1$ energy level in relation to the rest of the state space, leading to the following canonical parametrization of a generic state:
\be\rho=\left(\begin{array}{c|c}
\alpha&\vc x^\dagger\\
\hline\vc x&A\end{array}\right),\ee
where $\alpha:=\bra{E_1}\rho\ket{E_1}\ge0$ is a real scalar, $\vc x$ is a complex $(d-1)$-dimensional vector, and $A$ is a $(d-1)$-dimensional subnormalized density operator. In fact, any $\rho$ can be reversibly converted (through an allowed unitary) to a state with a diagonal $A$ and nonnegative real entries in $\vc x$. The parameter $\alpha$ assumes its greatest value $1$ when $\rho$ coincides with the Gibbs state $\ket{E_1}$, and its least value $0$ when $\rho$ is supported on the subspace orthogonal to $\ket{E_1}$. Therefore, we can think of
\be\nu_\mathrm I(\rho):=1-\alpha\ee
as a measure of the deviation of $\rho$ from equilibrium, or in other words, its ``nonequilibrium'' (hence the letter $\nu$). However, this measure does not contain any information about the coherences between different energy levels: it measures the nonequilibrium manifest in the diagonal part of $\rho$, related to the statistical distribution of energy amongst different energy levels. This aspect of nonequilibrium has in the past been referred to as ``informational nonequilibrium'' \cite{NU} (hence the subscript ``I'').

Another measure of nonequilibrium is the quantity \footnote{We explain in Supplementary Note~3 how to assign a meaningful value to this quantity when $A$ is singular.}
\be\nu_\mathrm C(\rho):=1+\vc x^\dagger A^{-1}\vc x-\alpha.\ee
This quantity is also zero when $\rho=\gamma_\cS$, and non-zero for other states. However, it relates with the coherences present in the state (hence the subscript ``C''). The following result formalizes these quantities as measures of nonequilibrium.
\begin{theorem}\label{thgp}
$\nu_\mathrm I$ and $\nu_\mathrm C$ are non-increasing under low-temperature Gibbs-preserving operations.
\end{theorem}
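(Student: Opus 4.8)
The plan is to reduce both monotonicity statements to a single structural fact about the Kraus operators of a channel that fixes the pure state $\proj{E_1}$. Writing $\cE(\cdot)=\sum_i K_i(\cdot)K_i^\dagger$, the defining condition $\cE\left(\proj{E_1}\right)=\proj{E_1}$ forces $\sum_i\left(K_i\ket{E_1}\right)\left(K_i\ket{E_1}\right)^\dagger=\proj{E_1}$, so every $K_i\ket{E_1}$ must be proportional to $\ket{E_1}$; that is, $K_i\ket{E_1}=c_i\ket{E_1}$ with $\sum_i\abs{c_i}^2=1$. In the block basis $\ket{E_1}\oplus\ket{E_1}^\perp$ this says each $K_i$ has first column $(c_i,\vc 0)^T$, so that
\be K_i=\left(\begin{array}{c|c}c_i&\vc r_i^\dagger\\ \hline\vc 0&B_i\end{array}\right)\ee
for some vector $\vc r_i$ and block $B_i$; trace preservation $\sum_iK_i^\dagger K_i=\eins$ then gives $\sum_ic_i\vc r_i=\vc 0$ and $\sum_i\left(\vc r_i\vc r_i^\dagger+B_i^\dagger B_i\right)=\eins$. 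These relations are all I will need, and they hold for any Kraus representation, so no genericity of $H_\cS$ is required.

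For $\nu_{\mathrm I}=1-\alpha$ with $\alpha=\bra{E_1}\rho\ket{E_1}$, the quickest route is to observe that $\sqrt\alpha$ is exactly the fidelity between $\rho$ and the pure fixed point $\proj{E_1}$; since fidelity is non-decreasing under any channel and $\cE\left(\proj{E_1}\right)=\proj{E_1}$, we get $\bra{E_1}\cE(\rho)\ket{E_1}\ge\alpha$, i.e.\ $\nu_{\mathrm I}\left(\cE(\rho)\right)\le\nu_{\mathrm I}(\rho)$. If a self-contained argument is preferred, substituting the block form of $K_i$ into $\alpha'=\sum_i\bra{E_1}K_i\rho K_i^\dagger\ket{E_1}$ and using $\sum_i\abs{c_i}^2=1$ together with $\sum_ic_i\vc r_i=\vc 0$ collapses all cross terms, leaving $\alpha'=\alpha+\sum_i\vc r_i^\dagger A\vc r_i\ge\alpha$ because $A\ge0$.

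For $\nu_{\mathrm C}$ the first step is to recognize that $1-\nu_{\mathrm C}(\rho)=\alpha-\vc x^\dagger A^{-1}\vc x$ is the Schur complement $(\rho/A)$ of $A$ in $\rho$, which admits the variational characterization
\be\left(\rho/A\right)=\min_{\langle E_1|\psi\rangle=1}\bra\psi\rho\ket\psi,\ee
the minimum being over all $\ket\psi$ whose $\ket{E_1}$-component equals $1$. It therefore suffices to show this minimum cannot decrease under $\cE$. Fixing any admissible $\ket\psi$ and writing $\ket{\phi_i}:=K_i^\dagger\ket\psi$, we have $\bra\psi\cE(\rho)\ket\psi=\sum_i\bra{\phi_i}\rho\ket{\phi_i}$. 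The decisive point is that the fixed-point structure pins the $\ket{E_1}$-component of each $\ket{\phi_i}$ to $\bar c_i$ regardless of $\ket\psi$: the first row of $K_i^\dagger$ is $(\bar c_i,\vc 0^T)$, so $\langle E_1|\phi_i\rangle=\bar c_i$. Hence for each $i$ with $c_i\ne0$ the vector $\ket{\tilde\phi_i}:=\ket{\phi_i}/\bar c_i$ satisfies the constraint, giving $\bra{\phi_i}\rho\ket{\phi_i}=\abs{c_i}^2\bra{\tilde\phi_i}\rho\ket{\tilde\phi_i}\ge\abs{c_i}^2(\rho/A)$, while for $c_i=0$ the term is $\ge0=\abs{c_i}^2(\rho/A)$. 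Summing and using $\sum_i\abs{c_i}^2=1$ yields $\bra\psi\cE(\rho)\ket\psi\ge(\rho/A)$ for every admissible $\ket\psi$; minimizing the left-hand side gives $\left(\cE(\rho)/A'\right)\ge(\rho/A)$, i.e.\ $\nu_{\mathrm C}\left(\cE(\rho)\right)\le\nu_{\mathrm C}(\rho)$.

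I expect the main obstacle to be the singular case flagged in the statement's footnote: the identity $1-\nu_{\mathrm C}=\alpha-\vc x^\dagger A^{-1}\vc x$ and its variational form presuppose that the input block $A$ and the output block $A'$ are invertible. I would handle this by replacing $A^{-1}$ with the Moore--Penrose pseudoinverse $A^+$ and verifying that positivity of $\rho$ forces $\vc x\in\mathrm{range}(A)$ (the range condition), so that the minimum in the displayed formula is still attained and equals $\alpha-\vc x^\dagger A^+\vc x$; one must then check that the inequality proved for invertible blocks survives the limit, e.g.\ by the regularization $A\to A+\epsilon\eins$ with $\epsilon\downarrow0$. The rest is a direct consequence of the Kraus structure and carries over verbatim without the non-degeneracy hypotheses on $H_\cS$.
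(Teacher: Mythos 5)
Your proof is correct. For $\nu_\mathrm I$ it essentially coincides with the paper's argument: the paper derives the same block form of the Kraus operators from the fixed-point condition (its $\eta_i,\vc v_i,L_i$ are your $c_i,\vc r_i,B_i$, with identical trace-preservation relations) and obtains $\beta=\alpha+\sum_i\vc v_i^\dagger A\vc v_i\ge\alpha$, which is exactly your direct computation; your fidelity one-liner is a pleasant alternative the paper does not use. For $\nu_\mathrm C$, however, your route is genuinely different. The paper conjugates $\cE$ by the CP maps $\cE_{A,\vc x}(\cdot)=K_{A,\vc x}(\cdot)K_{A,\vc x}^\dagger$, where $K_{A,\vc x}$ is the unit upper-triangular block matrix built from $-\vc x^\dagger A^{-1}$, forming $\Lambda_\cE=\cE_{B,\vc y}\circ\cE\circ\cE_{A,-\vc x}$; this map carries $D_\rho=\mathrm{diag}(c_\rho,A)$ to $D_\sigma$, its Kraus operators $J_i=K_{B,\vc y}K_iK_{A,-\vc x}$ inherit the same block structure, and reading off the corner entry gives the exact identity $c_\sigma=c_\rho+\sum_i\vc u_i^\dagger A\vc u_i$, manifestly $\ge c_\rho$. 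You instead use the variational characterization $(\rho/A)=\min\left\{\bra\psi\rho\ket\psi:\langle E_1|\psi\rangle=1\right\}$ together with the observation that the $E_1$-component of $K_i^\dagger\ket\psi$ is pinned to $\bar c_i$ regardless of $\ket\psi$. What each approach buys: the paper's conjugation yields an exact formula for the increment $c_\sigma-c_\rho$ as a manifestly nonnegative sum (not merely an inequality), but it presupposes invertible $A$ (the map $\cE_{A,-\vc x}$ cannot even be written down otherwise) and must patch the singular case separately by diagonalizing $A$ with an allowed unitary and discarding zero rows; your argument never inverts anything, and since positivity of $\rho$ automatically forces $\vc x\in\mathrm{range}(A)$, the variational formula with the pseudoinverse $A^+$ holds unconditionally with the minimum attained---so your closing worry is already resolved by the range condition, the $\epsilon$-regularization is unnecessary, and your proof covers singular blocks (and degenerate Hamiltonians) verbatim, as you note.
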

These quantities, which are among a more general family described in Refs.~\cite{MJR14,CST}, are examples of monotones under the allowed operations. They can be identified by characterizing the Kraus operator representations of Gibbs-preserving operations (details in Supplementary Section~\ref{gmon}). In fact, since all cooling maps are Gibbs-preserving, these quantities are monotones also under cooling maps and low-temperature thermal operations.

These monotones together constitute sufficient conditions for state transitions under low-temperature Gibbs-preserving operations in the case where $\cS$ is a two-level system, i.e., $d=2$ (Supplementary Proposition~\ref{twolg}). They also turn out to be sufficient when both $\rho$ and $\sigma$ are pure (Supplementary Corollary~\ref{gpure}).

\subsection*{Comparing different thermodynamical models}
In particular, the two-level case provides a platform (see Fig.~\ref{figc}) to compare the Gibbs-preserving model with the exact treatment of thermal operations (which are equivalent to cooling maps for two-level systems). A host of state transitions that are forbidden under thermal operations are nonetheless allowed under Gibbs-preserving operations. This implies that the monotones $\nu$, when applied to thermal operations, are strictly less informative than the conditions of Theorem~\ref{thth}. The gaping disparity between the two models, which was first demonstrated in the recent work of Faist \textit{et al.}~\cite{Gib}, brings to the fore an important question in this field: which of the two models is a more accurate description of reality? While that dilemma remains, we now know the exact state transition conditions for cooling maps, which are demonstrably closer to low-temperature thermal operations than Gibbs-preserving operations are. Therefore, if one were to consider thermal operations the best available thermodynamical model, and if one considered the Gibbs-preserving model as an approximation thereto, then our work shows that we could have a better shot at finding exact state transition conditions by exploring classes of processes (such as our cooling maps) that are better approximations than the Gibbs-preserving model.
\begin{figure}[t]
    \centering
    \includegraphics[width=\columnwidth]{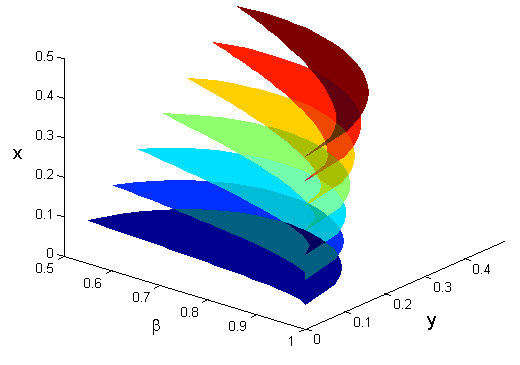}
    \caption{[Gibbs-preserving operations $\supsetneq$ cooling maps] Consider a parametric family of initial states $\rho(x):=1/2\proj{E_1}+x(\ket{E_1}\bra{E_2}+\ket{E_2}\bra{E_1})+1/2\proj{E_2}$, and a two-parameter family of final states $\sigma(y,\beta):=\beta\proj{E_1}+y(\ket{E_1}\bra{E_2}+\ket{E_2}\bra{E_1})+(1-\beta)\proj{E_2}$, on a two-level system, with $x,y,\beta$ real and nonnegative. For each value of $x$, the corresponding region in the $(y,\beta)$ plane represents part of the parametric state space that is reachable via Gibbs-preserving operations, but not via cooling maps (or thermal operations), from the initial state $\rho(x)$.}
    \label{figc}
\end{figure}

\section{Discussion}
Much remains to be discovered in the world of quantum thermodynamics. In particular, low-temperature situations, wherein exotic coherent phenomena lead to numerous technological applications, call for a thorough understanding of quantum coherences in thermodynamic processes. Some existing works on this aspect \cite{catcoh,CorW} pertain to the use of environmental coherence to aid thermodynamic state transitions in the system, as opposed to the evolution of the system's own coherence under state transitions. Recent work on the latter \cite{CohRE,MJR14,CorW,Coh,CST} provides insights that apply to all temperatures. Particularly, Ref.~\cite{CST} identifies the essence of this problem, namely the interplay between time-translation symmetry and thermal inequilibrium. However, in the general case it appears to be challenging to find complete (necessary and sufficient) conditions that can be expressed succinctly. In this paper, we compromise on the range of temperatures in our scope of generality, but by doing so we make significant progress in the low-temperature regime through our ``cooling maps'' characterization. We find the necessary and sufficient conditions for state transitions under cooling maps, and also confirm rigorously that low-temperature thermal operations can optimally preserve coherences.

The main open question emerging from this work is whether the mathematically characterized cooling maps are equivalent to the physically motivated thermal operations, or merely a close approximation thereof. Their equivalence for the cases of two-level systems and mixtures of optimally coherent processes motivates us to conjecture equivalence in general. The study of cooling maps aided by catalysts, and possible generalizations to higher temperatures, are other open problems that would provide insight into thermodynamics. Likewise, the monotones derived from the Gibbs-preserving model could have higher-temperature generalizations that improve our understanding of coherence transfer in thermodynamics. Finally, there is potential for experimental realization and testing of our results, for instance using a superconducting artificial atom coupled with a network of spins that acts as a bath. We leave these avenues for future work.

\makeatletter
\newcommand\tection{\@startsection {section}{1}{\z@}%
                                  {2ex}%
                                   {1ex}%
                                   {\noindent\normalsize\bfseries}}
\makeatother

\tection*{Acknowledgments}
\noindent The authors thank Ish Dhand, Joydip Ghosh, Mark Girard, Micha{\l} Horodecki, Peter H\o yer, Barry Sanders, and Dongsheng Wang for helpful discussions. VN thanks Poornima Ambalavanan for help with figures. This work was carried out with the help of NSERC funding.
\widetext
\clearpage
\begin{center}
\textbf{\large Supplemental Material}
\end{center}
\renewcommand{\thepage}{S\arabic{page}}
\renewcommand\thesection{S\arabic{section}}
\setcounter{page}{1}
\setcounter{section}{0}
\makeatletter
\renewcommand{\theequation}{S\arabic{section}.\arabic{equation}}
\addtocounter{ifsup}{1}
\numberwithin{figure}{ifsup}
\renewcommand{\thefigure}{S\arabic{figure}}
\setcounter{figure}{0}
\renewcommand{\bibnumfmt}[1]{[S#1]}
\renewcommand{\citenumfont}[1]{S#1}

\section{The thermal operations model}\label{sther}
Here we provide a summary of the relevant background for understanding the ``thermal operations" model of quantum thermodynamics. We base the discussion on the content of Refs.~\cite{SReth,SNan}.

Let us call the system of interest $\cS$. In classical thermodynamics, $\cS$ is some composite system consisting of a huge number of constituent parts---a gas, a spin lattice, etc. In that case we can accurately model thermal properties using a formalism that does not actually monitor the exact quantum state (the ``microstate") of $\cS$, but rather only a coarse-grained description that includes only a few so-called ``macroscopic" variables, such as the temperature, pressure, and magnetic moment. On the other hand, in quantum thermodynamics, the microstate is part of the formalism. The ``thermodynamic" element lies in how the \emph{environment} is modeled: The environment is assumed to be an ideal thermal reservoir (or ``heat bath"). This form of the environment, characterized by some properties that we will discuss below, naturally renders the dynamics of the system ``thermalizing".

This approach allows us to not only match the classical thermodynamical expectation of eventual ``equilibration" of the system with the environment, but to also understand how the microstate evolves \emph{while the system equilibrates}. The processes that can occur in the course of equilibration are classified under the label ``thermal operations".
\subsection{The heat bath}\label{bath}
The environment (call it $\cR$) of $\cS$ is an ideal heat bath, characterized by the following properties:
\begin{enumerate}
\item The state of $\cR$ is a Gibbs state at some temperature $T$. This temperature acts as the ``ambient" condition determining the dynamics of $\cS$.
\item This state of $\cR$ is supported almost entirely on a typical set $\cE_\cR$ of energy levels.
\item The energies $F$ in the typical set $\cE_\cR$ are concentrated in a region of radius $O\left(F_M^{1/2}\right)$ around the mean value $F_M$. (We use $F$ for energy levels of $\cR$, to distinguish them from those of $\cS$).
\item The multiplicity, or degeneracy, $g_\cR(F)$ of energy levels in $\cE_\cR$ scales at least exponentially in $F$:
\be g_\cR(F)\ge g_1\exp\left[c_\cR(F-F_1)\right]\ee
for some constant $c_\cR>0$, where $F_1$ is the ground state energy.
\item For any two energies $(E_i,E_j)$ of $\cS$, there exist $(F_k,F_\ell,)$ in $\cE_\cR$ such that\label{med}
\be E_i-E_j=F_k-F_\ell.\ee
\item For small perturbations about the peak $F_M$, the multiplicity goes as
\be g_\cR(F_M-\epsilon)\approx g_\cR(F_M)\exp(-\beta\epsilon),\ee
where $\beta:=1/(k_BT)$ with $k_B$ the Boltzmann constant.
\end{enumerate}
All of these properties are exhibited by a system that consists of many weakly interacting identical systems all prepared in their respective Gibbs states, i.e., a composite in a state of the form $\gamma^{\otimes n}$ with $\gamma$ a Gibbs state. In the present work, we are interested in the low-temperature limit. In this limit, the state of the bath is almost completely supported in its ground space, and therefore, all the above requirements excepting No.~\ref{med} are trivially satisfied.
\subsection{Carrying out a thermal operation}
We now consider the definition of thermal operations in detail, in order to clarify and justify the specifics. For convenience, we repeat below the definition of thermal operations from the text, with minor modifications.
\begin{definition}[Thermal operation]
A process (i.e., a quantum channel) on $\cS$, that can be realized operationally in the following steps:
\begin{enumerate}
\item Bring $\cS$ (which is initially isolated) together with an arbitrary ancillary system $\cA$, which is prepared in its own Gibbs state $\gamma_\cA:=\left(1/Z_\cA\right)\exp\left(-\beta H_\cA\right)$ corresponding to its own free Hamiltonian $H_\cA$ and the ambient temperature $T$. Physically, $\cA$ is all or part of the heat bath $\cR$, which in turn is modeled as discussed in the previous section.
\item Perform any global energy-preserving unitary evolution $U$ on the composite system $\cS\cA$.
\item Discard the ancilla $\cA$ (i.e., isolate $\cS$ again).
\end{enumerate}
Mathematically, the channel is represented by a completely positive (CP) trace-preserving (TP) map $\cE$ whose action on an arbitrary state $\rho$ of $\cS$ is given by
\begin{equation}\label{thop1}
\rho\mapsto\cE(\rho)=\Tr_\cA\left[U\left(\rho\otimes\gamma_\cA\right)U^\dagger\right],
\end{equation}
where $\Tr_\cA$ is the mathematical operation of partial trace with respect to $\cA$, corresponding to the physical operation of discarding the system $\cA$.
\end{definition}

Let us look closely at the above operational description: What does it mean to be able to attach an \emph{arbitrary} ancilla and perform an \emph{arbitrary} energy-conserving unitary? The arbitrariness of the ancilla $\cA$ means that the ancilla can feature \emph{any} number of degrees of freedom, and that its free Hamiltonian $H_\cA$ is unrestricted. $H_\cA$ could even be time-dependent: as explained in Ref.~\cite{SReth}, we can model time-dependence by a time-independent Hamiltonian, provided we include an additional ``clock" system into the apparatus. But what about interactions between $\cS$ and $\cA$? The fact that we start out and end up with $\cS$ isolated implies that, while we can ``turn on" an interaction in between, the initial and final settings must be ones where the dynamics of $\cS$ is free. Therefore, the Hamiltonian of the composite $\cS\cA$ at the start and end of the protocol has the form
\be H_{\cS\cA}=H_\cS\otimes\eins_\cA+\eins_\cS\otimes H_\cA.\ee
As explained in the main text, the energy conservation condition on the unitary evolution $U$ can be stated in terms of the eigenvalues and eigenvectors of $H_{\cS\cA}$ as
\be\bra{G_j;\alpha}U\ket{G_k;\beta}=0,\ee
where $G_j$ and $G_k$ are distinct eigenvalues. Also recall from the main text that the energy levels of $H_{\cS\cA}$ have the form
\be G_i=E_j+F_k,\ee
where $E_j$ is one of the eigenvalues of $H_\cS$ and $F_k$ an eigenvalue of $H_\cA$. An energy-conserving unitary can connect different energy levels on $\cS$ by raising or lowering $E$ while lowering or raising $F$ by the same amount.

\subsection{The low-temperature assumption}\label{slowt}
Here we make our notion of lowness of temperature more precise. We define low temperature with reference to the properties of the heat bath $\cR$, discussed earlier. One of the properties is that the state of the bath is a Gibbs state at some temperature $T$. This has the form
\begin{align}
\gamma_\cR&=\fr1{Z_\cR}\exp\left(-\beta H_\cR\right)\nonumber\\
&=\sum_j\fr{\exp\left(-\beta F_j\right)}{Z_\cR}\sum_{t=1}^{g_j}\proj{F_j;t}\nonumber\\
&=\sum_j\fr{g_j\exp\left(-\beta F_j\right)}{Z_\cR}\Pi_j.
\end{align}
Here we denote by $g_j$ the multiplicity of level $F_j$, and $t$ is some label that identifies individual eigenvectors within a degenerate subspace. $\Pi_j:=(1/g_j)\sum_{t=1}^{g_j}\proj{F_j;t}$ represents the normalized projector onto the subspace of energy $F_j$. If we now choose $\beta$ large enough that
\be g_1\exp\left(-\beta F_1\right)\gg g_j\exp\left(-\beta F_j\right)\ee
for any $j\ne1$, we then effectively have
\be\gamma_\cR\approx\Pi_1,\ee
which is the form in which the low-temperature assumption is used in the main matter. The range of temperatures at which this approximation is justified is determined by the nature of the bath, and also by the relation of the bath to the system. For example, for a bath consisting of many identical systems in identical Gibbs states, i.e. of the form $\gamma^{\otimes n}$, our low-temperature assumption is satisfied for temperatures $T\ll F_2-F_1$, where $F_1$ and $F_2$ are the ground and first excited state energies of each subsystem in the bath. Interestingly, it might be possible to justify our low-temperature assumption even in cases where we know little about the actual composition of the bath: based on the behavior of the system itself. For example, if the system is a superconducting circuit and the bath is the environment that is not in our control, then at temperatures below the system's superconducting critical temperature one could assume the bath to be in its ground state. This is because the system's existence in the superconducting phase implies that no energy is flowing from the bath into the system. In the remainder, we will use the term ``thermal operation" to mean ``thermal operation under the low-temperature assumption".

\section{Characterizing thermal operations as ``cooling maps"}\label{scool}
As we discussed in the previous supplementary note, our low-temperature assumption leads to the property that the initial state of any ancillary system $\cA$ used in implementing a thermal operation is supported almost entirely on its lowest energy level $F_1$:
\begin{equation}\label{lowt3}
\gamma_\cA\approx\left(\fr1{g_1}\right)\sum_{t=1}^{g_1}\proj{F_1;t}.
\end{equation}
In this note we will see that this leads to a convenient mathematical model.

\subsection{Cooling maps: motivation}\label{com}
Let us now turn our attention to the system of interest, $\cS$. It is characterized by its Hamiltonian $H_\cS$. Recall from the main text the following assumptions about $H_\cS$:
\begin{enumerate}
\item $H_\cS$ has no degenerate energy levels. Thus, its energy spectrum has the structure
\be E_1<E_2<\dots<E_d,\ee
where $d$ is the number of degrees of freedom in $\cS$.
\item For any two pairs of indices, $(i,j)$ and $(k,l)$,
\be E_i-E_j\ne E_k-E_l,\ee
except when either $i=j$ and $k=l$, or $i=k$ and $j=l$.\label{ndeg}
\end{enumerate}
These assumptions may seem very artificial and restrictive, but are in fact satisfied by generic physical systems. If a Hermitian matrix were chosen at random and assigned to act as the Hamiltonian, then with probability $1$ it would have the above properties. One might argue that actual physical systems, such as atoms, don't occur with random Hamiltonians, and typically have degenerate levels and gaps. But these degeneracies exist only when the systems are perfectly isolated from all external influences (e.g. electromagnetic fields). In reality the degeneracies are broken, even if only by tiny perturbations. Furthermore, the gaps  nature of such degeneracy-breaking phenomena, such as the Stark effect and the Zeeman effect, 

Recalling Eq.~(\ref{thop1}), and using the approximation Eq.~(\ref{lowt3}), we can write any thermal operation as
\begin{align}\label{thopd}
\cE(\rho)&=\Tr_\cA\left[U\left(\rho\otimes\gamma_\cA\right)U^\dagger\right]\nonumber\\
&\approx\Tr_\cA\left[U\left(\rho\otimes\left[\fr1{g_1}\sum_{t=1}^{g_1}\proj{F_1;t}\right]\right)U^\dagger\right]\nonumber\\
&=\fr1{g_1}\sum_{t=1}^{g_1}\Tr_\cA\left[U\left(\rho\otimes\proj{F_1;t}\right)U^\dagger\right]\nonumber\\
&=\fr1{g_1}\sum_{t=1}^{g_1}\cE_t(\rho),
\end{align}
where each $\cE_t$ is a CPTP map defined through
\be\cE_t(\rho):=\Tr_\cA\left[U\left(\rho\otimes\proj{F_1;t}\right)U^\dagger\right].\ee
The action of $\cE_t$ is determined by the action of $U$ on states of the form $\ket{E_j}\otimes\ket{F_1;t}$. In such a state, the energy of $\cS$ is $E_j$ while that of $\cA$ is the lowest possible, $F_1$. An energy-conserving $U$ can either retain the same amount of energy in either subsystem, or transfer some energy from $\cS$ to $\cA$. Therefore, level $j$ of $\cS$ can be mapped only to levels $k\le j$, and the overall effect is to ``cool" $\cS$.

It is useful to characterize the $\cE_t$'s through the structure of their Kraus operator decompositions. One possible set of Kraus operators $\{K_i\}$ can be constructed by assigning the following values to its matrix elements:
\be\bra{E_j}K_i\ket{E_k}:=\left(\bra{E_j}\otimes\bra{v_i}\right)U\left(\ket{E_k}\otimes\ket{F_1;t}\right),\ee
where
\be\{\ket{v_i}\}=\{\ket{F_\ell;s}\}.\ee
Physically, the above construction represents the fact that $K_i$ can change the state of $\cS$ from $\ket{E_k}\mapsto\ket{E_j}$ by virtue of $U$ taking the composite $\cS\cA$ from $\ket{E_k}\otimes\ket{F_1;t}\mapsto\ket{E_j}\otimes\ket{v_i}$. The $K_i$'s thus constructed fall into two categories:
\begin{enumerate}
\item When $\ket{v_i}=\ket{F_1;s}$ for some $s$: This case corresponds to $U$ not causing any flow of energy from $\cS$ to $\cA$ (since $\cA$ stays within the same energy level where it started). Because $H_\cS$ has no degeneracies, the final state of $\cS$, $\ket{E_j}$, must be identical with its initial state, $\ket{E_k}$. Therefore the $K_i$'s in this category are \emph{diagonal}.
\item When $\ket{v_i}=\ket{F_\ell;s}$ is an excited state of $\cA$: Here $U$ is raising $\cA$ from $F_1$ to $F_\ell\ne F_1$. Therefore, for energy conservation,
\be E_k-E_j=F_\ell-F_1.\ee
By the property \ref{ndeg} of $H_\cS$, there must be a unique pair $(j,k)$ satisfying this condition for a given $\ell$. Therefore, only one matrix element of such a $K_i$ can be non-zero, and so we arrive at the form
\be K_i\propto\ket{E_j}\bra{E_k}.\ee
\end{enumerate}
In the second category, note that $j$ is always smaller than $k$. Since each of the $\cE_t$'s can be Kraus-decomposed in this way, and $\cE$ is an incoherent mixture of the $\cE_t$'s, such a Kraus decomposition also exists for $\cE$. This suggests that probing the set of all channels with such Kraus decompositions might shed light on thermal operations. To this end, we define
\begin{definition}[Cooling map]
A quantum channel (CPTP map) with a Kraus decomposition consisting of Kraus operators of the following two classes:
\begin{enumerate}
\item Diagonal matrices $\{K_1\dots K_n\}$. Without loss of generality, we can assume $n\le d$.
\item Matrices of the form $J_{jk}\propto\ket{E_j}\bra{E_k}$, $j<k$. Without loss of generality we can assume that there is only one $J$ for every index pair $(j,k)$. For, if $\mu_{jk}\ket{E_j}\bra{E_k}$ and $\nu_{jk}\ket{E_j}\bra{E_k}$ are two Kraus operators occurring in the same decomposition of some channel, then we can combine them into just the one operator $\sqrt{\abs{\mu_{jk}}^2+\abs{\mu_{jk}}^2}\ket{E_j}\bra{E_k}$.
\end{enumerate}
\end{definition}
All matrix representations are in the standard basis $\{\ket{E_j}\}$. Note that the elements of the matrices can be complex. By the discussion preceding the above definition, we have the following:
\begin{obs}\label{obs1}
All low-temperature thermal operations are cooling maps.
\end{obs}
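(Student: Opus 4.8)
The plan is to begin with the defining expression for a low-temperature thermal operation, substitute the ground-state form of the ancilla, and then read off an explicit Kraus decomposition whose structure is forced by energy conservation together with the two non-degeneracy assumptions on $H_\cS$. First I would use the approximation $\gamma_\cA\approx(1/g_1)\sum_{t=1}^{g_1}\proj{F_1;t}$ to write $\cE$ as the uniform mixture $\cE=(1/g_1)\sum_t\cE_t$, where $\cE_t(\rho):=\Tr_\cA[U(\rho\otimes\proj{F_1;t})U^\dagger]$. Expanding each partial trace in the energy eigenbasis $\{\ket{F_\ell;s}\}$ of $H_\cA$ then exhibits, for each $t$, a Kraus family $\{K_i\}$ whose matrix elements are $\bra{E_j}K_i\ket{E_k}:=(\bra{E_j}\otimes\bra{v_i})U(\ket{E_k}\otimes\ket{F_1;t})$, with $\ket{v_i}$ running over $\{\ket{F_\ell;s}\}$.

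The heart of the argument is to show that each such $K_i$ is either diagonal or proportional to a single $\ket{E_j}\bra{E_k}$ with $j<k$. For this I would invoke energy conservation of $U$, which makes $\bra{E_j}K_i\ket{E_k}$ vanish unless $E_j+F_\ell=E_k+F_1$, where $F_\ell$ denotes the level containing $\ket{v_i}$. When $\ell=1$ this equation reads $E_j=E_k$, so non-degeneracy of the spectrum forces $j=k$ and $K_i$ is diagonal. When $\ell\ne1$ it reads $E_k-E_j=F_\ell-F_1>0$, forcing $k>j$; and the gap-non-degeneracy assumption --- that no two distinct index pairs share the same energy gap --- guarantees that at most one pair $(j,k)$ solves this equation for the given $F_\ell$, so $K_i\propto\ket{E_j}\bra{E_k}$ with $j<k$, exactly the off-diagonal form required.

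It then remains to assemble the pieces into a single decomposition for $\cE$ itself. Since $\cE$ is the uniform mixture of the $\cE_t$, the collection $\{(1/\sqrt{g_1})K_i^{(t)}\}$ indexed by all $(t,i)$ is a bona fide Kraus decomposition of $\cE$, and every operator in it already has one of the two admissible shapes. Finally I would pass to the canonical counts in the definition: off-diagonal operators carrying the same pair $(j,k)$ are merged into a single $J_{jk}$ by adding their weights in quadrature, while the diagonal operators --- all living in the $d$-dimensional space of diagonal matrices --- are recombined among themselves, via the isometric freedom in Kraus decompositions applied as a block that acts trivially on the off-diagonal operators, into at most $d$ diagonal operators $K_1,\dots,K_n$ with $n\le d$.

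I expect the crux to lie in the second step, and specifically in the claim that an excited-ancilla Kraus operator is supported on a single matrix element. This is where the otherwise harmless-looking gap-non-degeneracy assumption does essential work: without it, a single ancilla transition $F_1\to F_\ell$ could be simultaneously resonant with several distinct system gaps, and the associated $K_i$ would then carry several off-diagonal entries, destroying the cooling-map form. By contrast, the diagonal/off-diagonal bookkeeping of the last step is routine, the only point demanding care being that the recombination reducing the diagonal count must not mix diagonal operators with off-diagonal ones.
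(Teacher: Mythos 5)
Your proposal is correct and follows essentially the same route as the paper's own argument: decompose $\cE=(1/g_1)\sum_t\cE_t$ via the ground-state ancilla approximation, build Kraus operators from matrix elements of $U$, and use energy conservation together with the spectrum and gap non-degeneracy assumptions to force each operator to be either diagonal (ancilla stays in the ground level) or proportional to a single $\ket{E_j}\bra{E_k}$ with $j<k$ (ancilla excited, unique resonant pair). Your closing bookkeeping --- merging off-diagonal operators in quadrature and using the isometric freedom on the diagonal block to reach $n\le d$ --- matches the without-loss-of-generality remarks built into the paper's definition of a cooling map.
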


\subsection{The action of cooling maps}
Let us examine the action of a generic cooling map $\cE$ on a generic initial state $\rho$. Let a possible set of Kraus operators for $\cE$ be
\begin{align}
K_i&=\left(\begin{array}{cccc}\lambda_1^{(i)}&0&\hdots&0\\
0&\lambda_2^{(i)}&0&\vdots\\
\vdots&0&\ddots&0\\
0&\hdots&0&\lambda_d^{(i)}
\end{array}\right),\;i\in\{1\dots n\};\nonumber\\
J_{jk}&=\mu_{jk}\ket j\bra k,\;j<k\in\{1\dots d\}.\nonumber
\end{align}
Denote by $\vc\lambda_j$ the $n$-dimensional complex vector whose components are $\lambda_j^{(i)}$. Let $q$ be the \emph{Gramian matrix} of the collection $(\vc\lambda_1\dots\vc\lambda_d)$ of vectors. The Gramian is defined through
\be q_{jk}=\left\langle\vc \lambda_j,\vc \lambda_k\right\rangle,\ee
where on the right-hand side is the usual inner product between two vectors on $\bbC^n$. Define also the matrix $P\equiv(P_{j|k})$, through
\begin{equation}\label{Pdef}
P_{j|k}=\left\{\begin{array}{ll}q_{jj},&\textnormal{if }j=k;\\
\abs{\mu_{jk}}^2,&\textnormal{if }j\ne k.\end{array}\right.
\end{equation}
It can be seen by inspection that the action of $\cE$ on $\rho$ yields the state $\sigma$ whose components are given by
\begin{equation}\label{acte}
\sigma_{jk}=\left\{\begin{array}{ll}\sum_{\ell=1}^dP_{j|\ell}\rho_{\ell\ell},&\textnormal{if }j=k;\\
q_{jk}\rho_{jk},&\textnormal{if }j\ne k.\end{array}\right.
\end{equation}

The matrix $P$ has the following properties:
\begin{enumerate}
\item Upper-triangularity: $P_{j|k}=0$ if $j>k$. This follows from the upper-triangularity of the $J$'s.
\item Column-stochasticity: $P_{j|k}\ge0$  for all $(j,k)$; and $\sum_{j=1}^dP_{j|k}=1$ for all $k$. The latter follows from the trace-preserving (TP) condition on the action of $\cE$ [Eq.~(\ref{acte})]. The stochasticity of $P$ is the motivation for our use of ``conditional probability" notation to denote its matrix elements.
\end{enumerate}

In connection with the Gramian of a set of vectors, we recall the following useful result from linear algebra \cite{SHJ}: For any collection $(\vc v_j)$ of vectors on an inner product space, the Gramian matrix $q$ of the collection is positive-semidefinite. Conversely, any positive-semidefinite matrix is the Gramian of some collection of vectors. Combining this fact with the preceding observations about the action of cooling maps leads to:
\begin{lemma}\label{eqv}
For any two states $(\rho,\sigma)$ of $\cS$, the existence of a cooling map $\cE$ mapping $\rho\mapsto\sigma$ is equivalent to the existence of a $d\times d$ positive-semidefinite matrix $q$ with the following properties:
\begin{enumerate}
\item The diagonal of $q$ must be identical with the diagonal of an upper-triangular column-stochastic matrix $P$ such that
\be(\sigma_{11}\dots\sigma_{dd})^T=P(\rho_{11}\dots\rho_{dd})^T.\ee
\item Each off-diagonal element $q_{jk}$ must satisfy
\be\sigma_{jk}=q_{jk}\rho_{jk}.\ee
\end{enumerate}
\end{lemma}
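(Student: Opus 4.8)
The plan is to prove the two directions of the equivalence separately, in both cases leaning on the action formula Eq.~(\ref{acte}), the definition Eq.~(\ref{Pdef}), and the cited linear-algebra fact that a matrix is positive-semidefinite if and only if it is the Gramian of some collection of vectors.

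For the \emph{necessity} direction, I would suppose a cooling map $\cE$ carries $\rho\mapsto\sigma$, take its Kraus operators $\{K_i\}_{i=1}^n$ and $\{J_{jk}\}_{j<k}$, form the vectors $\vc\lambda_j=(\lambda_j^{(1)}\dots\lambda_j^{(n)})^T$, and let $q$ be their Gramian. Then $q$ is automatically positive-semidefinite. Property~2 is immediate from the off-diagonal line of Eq.~(\ref{acte}), namely $\sigma_{jk}=q_{jk}\rho_{jk}$. For Property~1 I would invoke the already-established facts that the associated $P$ is upper-triangular, column-stochastic, and satisfies $(\sigma_{11}\dots\sigma_{dd})^T=P(\rho_{11}\dots\rho_{dd})^T$; since $P_{j|j}=q_{jj}$ by Eq.~(\ref{Pdef}), the diagonal of $P$ coincides with that of $q$, exactly as required.

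For the \emph{sufficiency} direction, I would suppose such a $q$ and $P$ exist. Since $q$ is positive-semidefinite of rank $n\le d$, the converse Gramian fact produces vectors $\vc\lambda_1\dots\vc\lambda_d\in\bbC^n$ with $q_{jk}=\langle\vc\lambda_j,\vc\lambda_k\rangle$, and these define the diagonal Kraus operators $K_i$. For the off-diagonal operators I would set $J_{jk}=\sqrt{P_{j|k}}\,\ket{E_j}\bra{E_k}$ for $j<k$, which is legitimate because column-stochasticity forces $P_{j|k}\ge0$. By Eq.~(\ref{acte}) the resulting map sends $\rho$ to a state whose off-diagonal entries are $q_{jk}\rho_{jk}=\sigma_{jk}$ (Property~2) and whose diagonal is $P(\rho_{11}\dots\rho_{dd})^T=(\sigma_{11}\dots\sigma_{dd})^T$ (Property~1), so it realizes $\rho\mapsto\sigma$ provided it is genuinely a channel of cooling-map form (at most $d$ diagonal operators and one $J$ per pair $j<k$, both of which hold by construction).

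The one point requiring care---and the crux of the argument---is verifying the trace-preserving relation $\sum_iK_i^\dagger K_i+\sum_{j<k}J_{jk}^\dagger J_{jk}=\eins$. I expect this to be the main obstacle only in the sense that it is where all the hypotheses must conspire: the $k$-th diagonal entry of the left-hand side is $q_{kk}+\sum_{j<k}\abs{\mu_{jk}}^2=P_{k|k}+\sum_{j<k}P_{j|k}$, whereupon upper-triangularity of $P$ lets me extend the sum to all $j$ and column-stochasticity makes it equal $1$. Thus the diagonal-matching condition $q_{jj}=P_{j|j}$ of Property~1 is precisely what ties the weight retained by the diagonal Kraus operators to the weight carried off by the $J$'s, and this bookkeeping is exactly what guarantees completeness. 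Off-diagonal entries of the completeness sum vanish automatically because every $K_i^\dagger K_i$ and every $J_{jk}^\dagger J_{jk}=\abs{\mu_{jk}}^2\ket{E_k}\bra{E_k}$ is diagonal, so only the diagonal entries need checking, and the equivalence follows.
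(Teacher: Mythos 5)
Your proof is correct and takes essentially the same route as the paper: the paper presents this lemma as an immediate consequence of combining the Gramian characterization of positive-semidefinite matrices with the already-established action formula and the upper-triangularity and column-stochasticity of $P$, which is precisely the combination you carry out in both directions. Your explicit check of the completeness relation in the sufficiency direction is exactly the bookkeeping the paper leaves implicit, since there column-stochasticity of $P$ is derived from trace preservation and you simply run the same identity in reverse.
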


\subsection{Upper-triangular stochastic matrices and majorization}
It will be useful for our present purpose to better understand upper-triangular column-stochastic (UTCS) matrices. General column-stochastic matrices are known to induce a \emph{preorder} on the set of probability distributions, called the \emph{majorization} preorder \cite{SMOA}. In the following lemma, we prove that the action of UTCS matrices induces a \emph{partial order}, which by analogy we name ``upper-triangular majorization", or ``UT majorization".
\begin{definition}[UT majorization]
Let $\vc u\equiv(u_1,u_2\dots u_d)^T$ and $\vc v\equiv(v_1,v_2\dots v_d)^T$ be two $d$-dimensional probability distributions. We say that $\vc u$ \emph{UT-majorizes} $\vc v$, denoted $\vc u\dsm\vc v$, if the following $(d-1)$ inequalities are satisfied:
\begin{align}
u_d&\ge v_d,\nonumber\\
u_{d-1}+u_d&\ge v_{d-1}+v_d,\nonumber\\
&\;\;\vdots\nonumber\\
u_2+u_3\dots+u_d&\ge v_2+v_3\dots+v_d.\nonumber
\end{align}
\end{definition}
\begin{lemma}\label{Pcon}
If $\vc u$ and $\vc v$ are $d$-dimensional probability vectors and there exists a UTCS matrix $P$ such that $\vc v=P\vc u$, then $\vc u\dsm\vc v$.

Conversely, if $\vc u\dsm\vc v$, then there exists a UTCS $P$ such that $\vc v=P\vc u$. In fact, there exists such a $P$ with the following specific values on its diagonal:
\be P_{j|j}=\left\{\begin{array}{ll}\min\left(\fr{v_j}{u_j},1\right),&\textnormal{if }u_j>0;\\
0,&\textnormal{if }u_j=0.\end{array}\right.\ee
\end{lemma}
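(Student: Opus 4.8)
The plan is to prove the two implications separately, treating the forward direction as a one-line rearrangement and investing the real effort in the converse, which is where the UT-majorization inequalities are actually used and where the prescribed diagonal must be produced.

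For the forward direction, suppose $\vc v = P\vc u$ with $P$ upper-triangular and column-stochastic. I would fix a threshold $m\in\{2,\dots,d\}$ and expand the tail sum of $\vc v$ using $v_j=\sum_{k\ge j}P_{j|k}u_k$ (the sum begins at $k=j$ by upper-triangularity). Swapping the order of summation gives $\sum_{j=m}^d v_j=\sum_{k=m}^d u_k\left(\sum_{j=m}^{k}P_{j|k}\right)$. Since every column sums to one and all entries are nonnegative, the inner sum satisfies $\sum_{j=m}^{k}P_{j|k}\le\sum_{j=1}^{k}P_{j|k}=1$, whence $\sum_{j=m}^d v_j\le\sum_{k=m}^d u_k$. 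This is exactly the $m$-th UT-majorization inequality, so $\vc u\dsm\vc v$.

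For the converse I would first fix the diagonal to the prescribed values $P_{j|j}=\min(v_j/u_j,1)$, handling degenerate columns $u_j=0$ separately (note that in the setting of Theorem~\ref{thth} all $u_j=\rho_{jj}$ are strictly positive, so the generic case is the substantive one). With this choice the diagonal already delivers $\min(v_j,u_j)$ units into target $j$, so the off-diagonal part must supply the \emph{deficit} $d_j:=(v_j-u_j)^+$ into each target $j$ from above while draining the \emph{surplus} $s_k:=(u_k-v_k)^+$ out of each source $k$ downward. Writing $f_{j,k}=P_{j|k}u_k$ for $j<k$, the requirements become the transport constraints $\sum_{k>j}f_{j,k}=d_j$ and $\sum_{j<k}f_{j,k}=s_k$ with $f_{j,k}\ge0$ supported on $j<k$; since $s_k-d_k=u_k-v_k$ and both vectors are normalized, total surplus equals total deficit, so the only genuine question is whether the downward-only support is compatible with these marginals. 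I would settle this by an explicit \emph{waterfall}: sweep $\ell$ from $d$ down to $1$, carrying a reservoir $R$ of unallocated surplus, at each level first drawing $d_\ell$ out of $R$ and then adding $s_\ell$ to $R$. A short bookkeeping check shows that just before level $m$ one has $R=\sum_{\ell>m}(u_\ell-v_\ell)$, and that the step is feasible precisely when $R\ge d_m$; when $d_m=v_m-u_m>0$ this rearranges to $\sum_{\ell\ge m}u_\ell\ge\sum_{\ell\ge m}v_\ell$, the $m$-th UT-majorization inequality. Thus nonnegativity of the reservoir throughout is equivalent to the hypothesis $\vc u\dsm\vc v$, the surplus is exactly exhausted at the bottom because $\sum_\ell(u_\ell-v_\ell)=0$, and setting $P_{j|k}=f_{j,k}/u_k$ yields an upper-triangular column-stochastic matrix with the advertised diagonal.

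The main obstacle I anticipate is not any single estimate but the organization of the converse: choosing the reformulation (pin the diagonal, then reduce to a downward surplus-to-deficit transport) so that the UT-majorization inequalities emerge cleanly as the feasibility condition of the construction, and correctly disposing of the degenerate columns $u_j=0$, where the diagonal convention $P_{j|j}=0$ interacts with column-stochasticity.
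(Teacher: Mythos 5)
Your proof is correct and takes essentially the same route as the paper's: the forward direction is the same column-sum bound (your summation swap is just the paper's term-by-term addition of the componentwise equations $v_j=\sum_{k\ge j}P_{j|k}u_k$), and your converse—pinning the prescribed diagonal and sweeping backward from $j=d$, filling each deficit $(v_j-u_j)^+$ from surplus accumulated at higher indices—is exactly the paper's construction, with your reservoir invariant $R=\sum_{\ell>m}(u_\ell-v_\ell)$ coinciding with the paper's ``available'' quantity $a_{m+1}$. The only difference is cosmetic: the paper commits to a specific proportional draw rule for $P_{j|k}$, whereas you leave the allocation from the reservoir unspecified (which is fine, since any nonnegative allocation respecting the marginals works), and both treatments gloss over the same degenerate corner (e.g.\ $u_1=0$, where upper-triangularity and the prescribed $P_{1|1}=0$ conflict with column-stochasticity) that is irrelevant in the paper's application where all $\rho_{jj}\ne0$.
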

\begin{proof}
Assume first that there exists a UTCS $P$ such that $\vc v=P\vc u$. Componentwise, we have
\begin{align}
v_d=&P_{d|d}u_d;\nonumber\\
v_{d-1}=P_{d-1|d}u_d&+P_{d-1|d-1}u_{d-1};\nonumber\\
&\;\;\vdots\nonumber\\
v_1=P_{1|d}u_d+P_{1|d-1}&u_{d-1}\dots+P_{1|1}u_1.\nonumber
\end{align}
The stochasticity of $P$ implies that each of its elements is no greater than $1$ (i.e., $P_{j|k}\le1$). Therefore, the first of the above equations implies that $v_d\le u_d$. Adding the first two equations, we get $v_{d-1}+v_d\le u_{d-1}+u_d$. Continuing in this manner, we have all the desired inequalities to prove $\vc u\dsm\vc v$.\qed

Now to prove the converse, assume that $\vc u\dsm\vc v$. We shall construct a $P$ with the desired properties. Firstly, we fix the diagonal elements of $P$ as claimed in the Lemma statement:
\be P_{j|j}=\left\{\begin{array}{ll}\min\left(\fr{v_j}{u_j},1\right),&\textnormal{if }u_j>0;\\
0,&\textnormal{if }u_j=0.\end{array}\right.\ee
By construction, these values lie in the interval $[0,1]$ and so we're on track to construct a stochastic $P$. For each $j$, we require $P$ to act in such a way that
\begin{equation}\label{vPu}
v_j=P_{j|d}u_d+P_{j|d-1}u_{d-1}\dots+P_{j|j}u_j.
\end{equation}
The last term of the RHS, $P_{j|j}u_j$, is already fixed by our definition of the diagonal element $P_{j|j}$. It remains to choose the $P_{j|k}$ for all $k>j$ in such a way as to satisfy the above equation. The freedom we have in this choice is characterized by the quantity
\be r_j=v_j-P_{j|j}u_j=\max\left(0,v_j-u_j\right),\ee
which we may think of as a ``remainder" or ``deficit": the part of the RHS of Eq.~(\ref{vPu}) that remains to be filled in. Now let us consider each $j$ in sequence, starting from $j=d$.

The premise $\vc u\dsm\vc v$ implies that
\be u_d\ge v_d,\ee
and therefore,
\be r_d=0.\ee
This means that Eq.~(\ref{vPu}) has been achieved for $j=d$. The part of $u_d$ that is still ``available" to be mapped to lower components of $\vc v$ is
\be a_d:=u_d\left(1-P_{d|d}\right)=u_d-v_d\ge0.\ee
Now consider $j=d-1$. Again, $\vc u\dsm\vc v$ implies
\begin{align}
u_{d-1}+u_d&\ge v_{d-1}+v_d.\nonumber\\
\Rightarrow v_{d-1}-u_{d-1}&\le a_d.\nonumber
\end{align}
But the deficit in the $(d-1)^\textnormal{th}$ component is
\be r_{d-1}=\max\left(0,v_{d-1}-u_{d-1}\right)\le a_d.\ee
Therefore, this deficit can be filled in by some part of $a_d$. We do this by assigning
\be P_{d-1|d}:=\fr{r_{d-1}}{a_d}\left(1-P_{d|d}\right).\ee
The components of $P$ assigned thus far have taken care of Eq.~(\ref{vPu}) for $j=d$ and $j=d-1$. The part of $(u_{d-1}+u_d)$ that is still available to be mapped to lower components of $\vc v$ is
\begin{align}
a_{d-1}&:=u_d\left(1-P_{d|d}-P_{d-1|d}\right)+u_{d-1}\left(1-P_{d-1|d-1}\right)\nonumber\\
&=u_d+u_{d-1}-v_d-v_{d-1},\nonumber
\end{align}
and again, $\vc u\dsm\vc v$ implies that $a_{d-1}\ge0$.
\begin{figure}
\centering
\includegraphics[width=\textwidth]{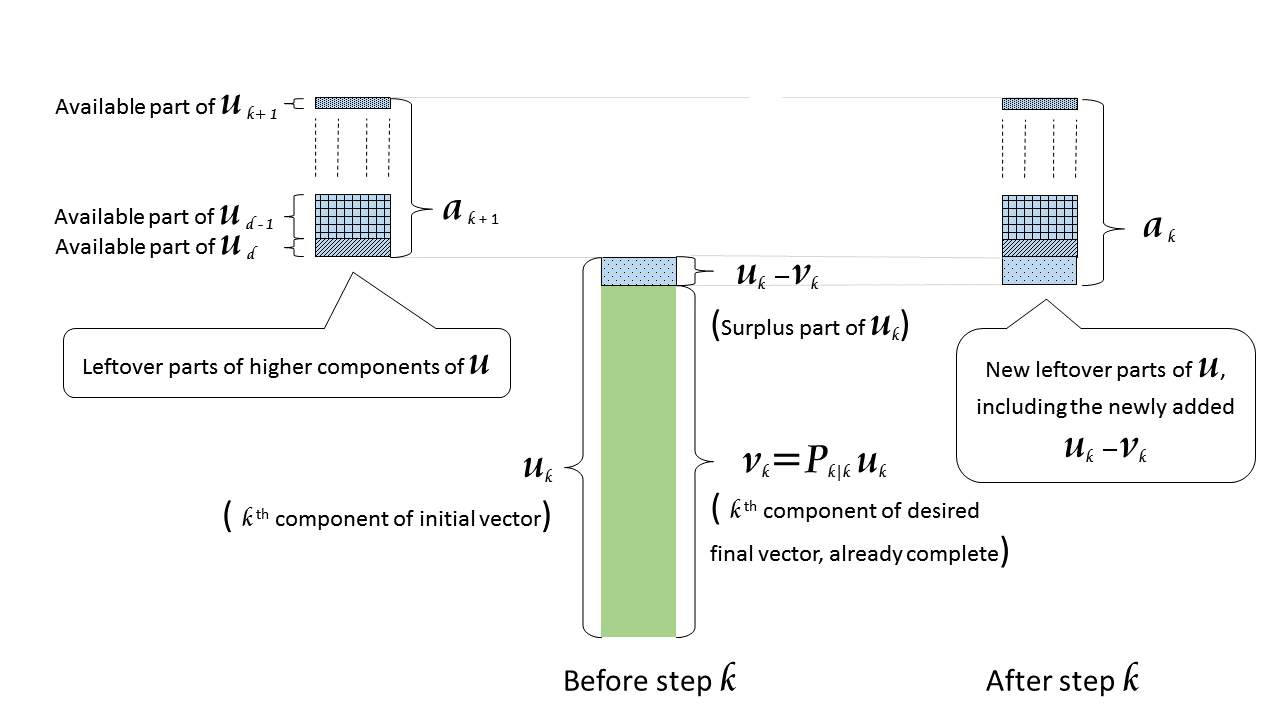}
\caption{A pictorial depiction of the construction of the stochastic map mapping $\vc u\mapsto\vc v$ in the proof of Lemma~\protect\ref{Pcon}. If $u_k\ge v_k$, then the $k^\textnormal{th}$ (backwards!) step consists of mapping the fraction $P_{k|k}=v_k/u_k$ of $u_k$ to complete the desired $v_k$ and then adding the leftover part $(u_k-v_k)$ to $a_{k+1}$ to enhance it to $a_k$.}\label{proof1}
\end{figure}
\begin{figure}
\centering
\includegraphics[width=\textwidth]{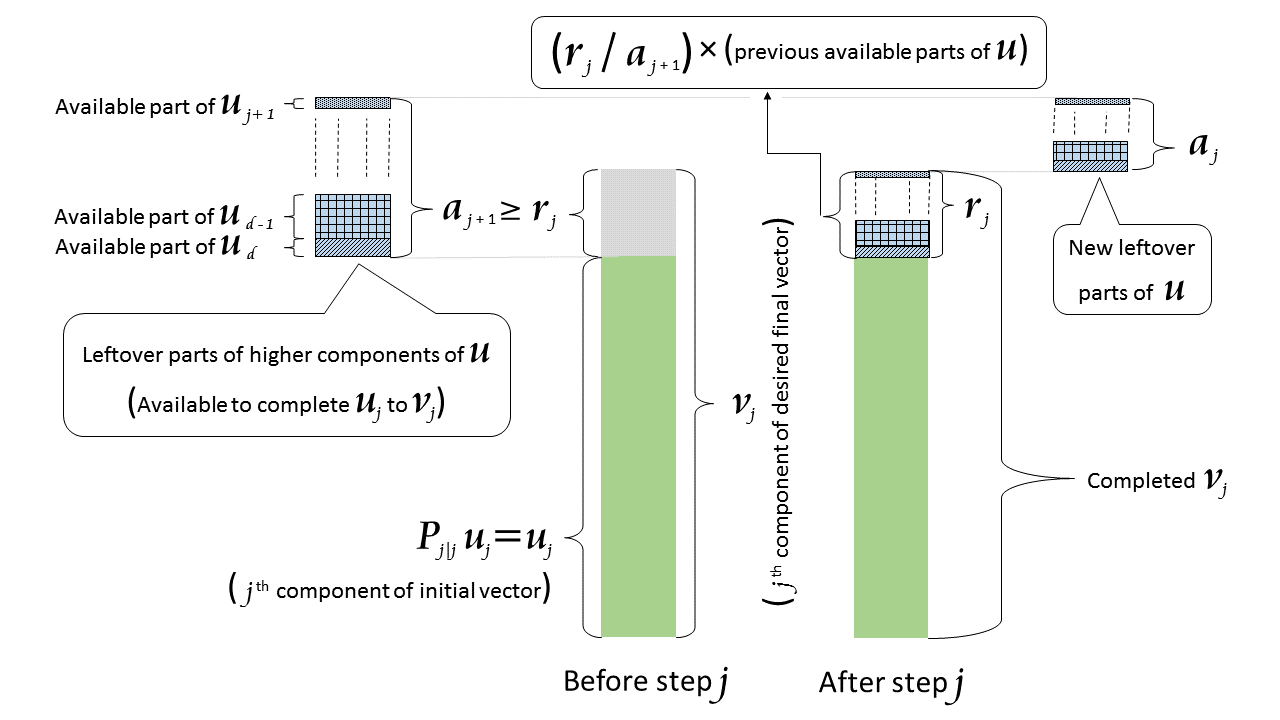}
\caption{In the $j^\textnormal{th}$ step, if $u_j<v_j$, then $P_{j|j}=1$, and $P_{j|j}u_j=u_j$ is still smaller than the desired $v_j$ by $r_j=v_j-u_j$. We then add exactly the fraction $(r_j/a_{j+1})$ of the leftover part of each higher component of $\vc u$ to $u_j$ to complete it to $v_j$.}\label{proof2}
\end{figure}

In the next step we have again that
\be r_{d-2}\le a_{d-1}\ee
and can therefore carry out a similar procedure as before, assigning
\be P_{d-2|d}:=\fr{r_{d-2}}{a_{d-1}}\left(1-P_{d|d}-P_{d-1|d}\right)\ee
and
\be P_{d-2|d-1}:=\fr{r_{d-2}}{a_{d-1}}\left(1-P_{d-1|d-1}\right).\ee
The basic idea is the following: for any $k$, if $u_k\ge v_k$, then $r_k=0$ and $P_{k|k}u_k=v_k$, therefore we do not need to map any higher component ($u_\ell$ for $\ell>k$) to ``complete" $v_k$. We can in fact add the surplus part $u_k-v_k$ to the ``available" $a_{k+1}$ to get a larger number, $a_k$, that is now available to complete the $v_j$'s for $j<k$. Supplementary Figure~1 illustrates this idea.

On the other hand, if for some $j$, $u_j<v_j$ (Supplementary Figure~2), then $r_j>0$ and $P_{j|j}u_j<v_j$. But in such a case, thanks to the UT majorization condition, we are assured that the ``available" part left over from higher components, which by our convention we call $a_{j+1}$, is at least $r_j$. We then take the overall part left over from each higher component of $\vc u$ and use up exactly the fraction $r_j/a_{j+1}$ of it to complete the $j^\textnormal{th}$ instance of Eq.~(\ref{vPu}), i.e.,
\be v_j=\sum_{k\ge j}P_{j|k}u_k.\ee
$P$ is upper-triangular by construction. Furthermore, using the recursive definition of the components of $P$, we can verify that
\be P_{j|k}\ge0\ee
and
\be\sum_{k\le j}P_{k|j}=1,\ee
guaranteeing stochasticity.
\end{proof}

\subsection{Reality check: UT majorization emerges from thermo-majorization}\label{sreal}
In quantum thermodynamics at general temperatures, an ordering relation called \emph{thermo-majorization} \cite{SNan} plays the role corresponding to that of UT majorization in our formalism. Although we arrived at UT majorization through rigorously examining the energy conservation condition in low-temperature thermal operations, it is worth while to convince ourselves of the soundness of our low-temperature limit. Why this matter is not trivial will become clear when we consider the following definition of thermo-majorization:
\begin{definition}[Thermo-majorization]
For $d$-dimensional probability distributions $\vc u$ and $\vc v$, $\vc u$ \emph{thermo-majorizes} $\vc v$, denoted
\be\vc u\thm\vc v,\ee
if there exists a \emph{column-stochastic} matrix $P$ such that
\begin{enumerate}
\item $P$ fixes the Gibbs distribution: $P\vc u_\gamma=\vc u_\gamma$, where $\vc u_\gamma:=\left(1/Z_\cS\right)\left(\exp(-\beta E_1)\dots\exp(-\beta E_d)\right)^T$ is the diagonal part of the Gibbs state $\gamma_\cS$.
\item $P$ maps $\vc u$ to $\vc v$: $\vc v=P\vc u$.
\end{enumerate}
\end{definition}
Ostensibly, it might seem that the low-temperature limit of thermo-majorization could be obtained by simply approximating the Gibbs state by the ground state:
\be\gamma_\cS\approx\proj{E_1}.\ee
This approximation would lead to a corresponding counterpart of thermo-majorization that is associated with all stochastic matrices $P$ that obey
\be P_{j|1}=0\ee
for $j>1$. However, this is clearly different from UT majorization, which is associated with a more restricted class of such $P$'s---namely, upper-triangular matrices. The following exercise serves to vindicate UT majorization as the right option in favour of the less-restrictive version. Consider some finite inverse temperature $\beta$. We then have the following conditions for $P$ to fix $\vc u_\gamma$:
\begin{align}
P_{1|1}\exp(-\beta E_1)+\sum_{j>1}P_{1|j}\exp(-\beta E_j)&=\exp(-\beta E_1),\nonumber\\
P_{2|1}\exp(-\beta E_1)+P_{2|2}\exp(-\beta E_2)+\sum_{j>2}P_{2|j}\exp(-\beta E_j)&=\exp(-\beta E_2),\nonumber\\
&\;\;\vdots\nonumber\\
\sum_{j<d}P_{d|j}\exp(-\beta E_j)+P_{d|d}\exp(-\beta E_d)&=\exp(-\beta E_d).\label{pgam}
\end{align}
In the limit $\beta\to\infty$,
\be\exp\left[-\beta(E_j-E_k)\right]=0\ee
whenever $j>k$. In this limit if we multiply the $j^\textnormal{th}$ of Eqs.(~\ref{pgam}), for any $j>1$, by $\exp(\beta E_1)$, we end up with
\be P_{j|1}=0\;\forall j>1.\ee
Now considering only the equations for $j>2$, we multiply by $\exp(\beta E_2)$ to infer that
\be P_{j|2}=0\;\forall j>2.\ee
Proceeding in this manner, we can prove that $P$ is upper-triangular in the limit.

One can carry out a similar verification with the other, equivalent definition of thermo-majorization in Ref.~\cite{SNan} (in terms of Gibbs-rescaled and reordered distributions). There one will find that for all distributions with no zero entries (i.e., for all but a measure-zero subset) the canonical permutation of vector components through which thermo-majorization is defined will approach the identity permutation as $\beta\to\infty$, thereby yielding UT majorization in the limit.

\subsection{State transition conditions}\label{stran}
We now have all the ingredients to derive our main result: the necessary and sufficient conditions for a state transition to be achievable through a cooling map.
\begin{athm}[Theorem~\ref{thth} of the main text]\label{coth}
For two states $\rho$ and $\sigma$ on $\cS$, arbitrary except that the matrix elements of $\rho$ are all nonzero ($\rho_{jk}\equiv\bra{E_j}\rho\ket{E_k}\ne0$), define the $d\times d$ matrix $Q$:
\be Q_{jk}=\left\{\begin{array}{ll}\min\left(\fr{\sigma_{jj}}{\rho_{jj}},1\right),&\textnormal{if }j=k;\\
\fr{\sigma_{jk}}{\rho_{jk}},&\textnormal{if }j\ne k.\end{array}\right.\ee
The state transition $\rho\mapsto\sigma$ is possible through a cooling map \emph{if and only if} both of the following conditions hold:
\begin{enumerate}
\item The diagonal of $\rho$ UT-majorizes that of $\sigma$:
\be(\rho_{11}\dots\rho_{dd})^T\dsm(\sigma_{11}\dots\sigma_{dd})^T;\ee
\item The matrix $Q$ is positive-semidefinite:
\be Q\ge0.\ee
\end{enumerate}
\end{athm}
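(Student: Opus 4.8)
The plan is to reduce everything to Supplementary Lemma~\ref{eqv}, which already translates the existence of a cooling map $\rho\mapsto\sigma$ into the existence of a single positive-semidefinite matrix $q$ whose off-diagonal entries reproduce the off-diagonal transformation $\sigma_{jk}=q_{jk}\rho_{jk}$ and whose diagonal coincides with that of some upper-triangular column-stochastic (UTCS) matrix $P$ effecting the diagonal transition $\vc v=P\vc u$, where $\vc u=(\rho_{11}\dots\rho_{dd})^T$ and $\vc v=(\sigma_{11}\dots\sigma_{dd})^T$. Since every $\rho_{jk}\ne0$ by hypothesis, the off-diagonal entries of any admissible $q$ are \emph{forced}: we must have $q_{jk}=\sigma_{jk}/\rho_{jk}=Q_{jk}$ for $j\ne k$. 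Thus the only freedom left is in the diagonal of $q$, and the theorem amounts to deciding when that freedom can be exercised so as to render $q$ positive-semidefinite.

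First I would pin down the allowed range of the diagonal entries. For any UTCS $P$ realizing $\vc v=P\vc u$, column-stochasticity gives $P_{j|j}\le1$, while upper-triangularity gives $v_j=\sum_{k\ge j}P_{j|k}u_k\ge P_{j|j}u_j$, hence $P_{j|j}\le v_j/u_j$ (recall $u_j=\rho_{jj}>0$). Therefore every admissible diagonal entry satisfies $q_{jj}=P_{j|j}\le\min(v_j/u_j,1)=Q_{jj}$, so $Q_{jj}$ is precisely the \emph{largest} diagonal value compatible with the diagonal transition. Crucially, this maximum is attainable: the explicit $P$ constructed in the converse of Supplementary Lemma~\ref{Pcon} has diagonal exactly $P_{j|j}=\min(v_j/u_j,1)=Q_{jj}$, and such a $P$ exists whenever $\vc u\dsm\vc v$.

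The key structural observation driving both directions is a monotonicity of positive-semidefiniteness under diagonal shifts: since any candidate $q$ shares the fixed off-diagonal $Q_{jk}$, the difference $Q-q$ is a diagonal matrix with entries $Q_{jj}-q_{jj}\ge0$, hence positive-semidefinite. For sufficiency I would assume conditions~1 and~2. Condition~1 together with Supplementary Lemma~\ref{Pcon} furnishes a UTCS $P$ with diagonal $Q_{jj}$ realizing $\vc v=P\vc u$; taking $q:=Q$ then matches this diagonal and the forced off-diagonal, while condition~2 says exactly that $q=Q\ge0$, so Supplementary Lemma~\ref{eqv} yields the cooling map. For necessity, a cooling map gives (by Lemma~\ref{eqv}) some positive-semidefinite $q$ with off-diagonal $Q_{jk}$ and diagonal equal to that of a UTCS $P$; the existence of $P$ forces $\vc u\dsm\vc v$ (condition~1) through the forward part of Lemma~\ref{Pcon}, while $Q=q+(Q-q)$ is a sum of two positive-semidefinite matrices and hence $Q\ge0$ (condition~2).

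The part requiring the most care is the passage from ``some admissible diagonal makes $q$ positive-semidefinite'' to the clean criterion ``$Q\ge0$'': one must verify both that raising each diagonal entry to its maximal value $Q_{jj}$ can only help positivity, and that this maximal diagonal is simultaneously realizable by a valid UTCS $P$. Both hinge on the coincidence, established above, between the UT-majorization bound $\min(v_j/u_j,1)$ and the diagonal of $Q$. Once that identity is in hand, the diagonal-shift argument makes the equivalence essentially immediate, with no genuine optimization over the remaining off-diagonal structure of $P$ being necessary.
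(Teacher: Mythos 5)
Your proposal is correct and follows essentially the same route as the paper's own proof: both directions reduce to Supplementary Lemma~\ref{eqv}, with Lemma~\ref{Pcon} supplying the UTCS matrix whose diagonal attains $\min(\sigma_{jj}/\rho_{jj},1)$ for sufficiency, and the decomposition $Q=q+D$ with $D$ diagonal and nonnegative giving $Q\ge0$ for necessity. The only difference is cosmetic: you spell out the derivation of the bound $P_{j|j}\le\min(v_j/u_j,1)$, which the paper states without proof.
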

\begin{proof}[Proof$\Leftarrow$]
Assume that the conditions stated in the theorem hold. The second condition states that $Q\ge0$. From the first condition and Supplementary Lemma~\ref{Pcon}, it follows that the diagonal elements of $Q$ are the diagonal elements of a UTCS matrix that maps $(\rho_{11}\dots\rho_{dd})^T\mapsto(\sigma_{11}\dots\sigma_{dd})^T$. As well, the off-diagonal elements of $Q$ are constructed to satisfy the condition of Supplementary Lemma~\ref{eqv}. Therefore, by the same lemma, there exists a cooling map that takes $\rho$ to $\sigma$.
\end{proof}
\begin{proof}[Proof$\Rightarrow$]
Assume now that there exists a cooling map achieving $\rho\mapsto\sigma$. By Supplementary Lemma~\ref{eqv}, there exists a $d\times d$ matrix $q\ge0$ with the following properties:
\begin{enumerate}
\item The diagonal of $q$ is also the diagonal of a UTCS matrix $P$ such that \be(\sigma_{11}\dots\sigma_{dd})^T=P(\rho_{11}\dots\rho_{dd})^T;\ee
\item For $j\ne k$, $\sigma_{jk}=q_{jk}\rho_{jk}$.
\end{enumerate}
Then, we have the following arguments to prove the corresponding conditions stated in the theorem:
\begin{enumerate}
\item From the first condition above, it follows that there exists a UTCS $P$ that maps $(\rho_{11}\dots\rho_{dd})^T\mapsto(\sigma_{11}\dots\sigma_{dd})^T$. Therefore, by Supplementary Lemma~\ref{Pcon},
\be(\rho_{11}\dots\rho_{dd})^T\dsm(\sigma_{11}\dots\sigma_{dd})^T.\ee
\item Consider the matrix $Q$ defined in the theorem statement. It has the same off-diagonal elements as $q$, but the diagonal elements
\be Q_{jj}=\min\left(\fr{\sigma_{jj}}{\rho_{jj}},1\right).\ee
For any UTCS matrix $P$ that maps $(\rho_{11}\dots\rho_{dd})^T\mapsto(\sigma_{11}\dots\sigma_{dd})^T$, the diagonal elements are bounded as follows:
\be P_{j|j}\le\min\left(\fr{\sigma_{jj}}{\rho_{jj}},1\right).\ee
Therefore,
\be q_{jj}=P_{j|j}\le Q_{jj}.\ee
This implies that
\be Q=q+D,\ee
where $D$ is a diagonal matrix with nonnegative entries. Since $q$ and $D$ are both positive-semidefinite, it follows that
\be Q\ge0.\ee
\end{enumerate}
\end{proof}
We can adapt the above theorem to cases where one or more $\rho_{jk}$'s are zero. The following proposition contains the modified version.
\begin{prop}
In cases where there are one or more zeroes in the matrix representation of $\rho$, the conditions of the theorem are replaced by the following revised set of conditions. In addition to the revised version of the two original conditions there is a third one, which we list \emph{first} because it is the easiest to check (and not because we believe that any respectable theory of thermodynamics must have a ``zeroth" law):
\begin{enumerate}
\setcounter{enumi}{-1}
\item For each pair $(j,k)$ such that $j\ne k$ and $\rho_{jk}=0$, the corresponding entry in $\sigma$ is also zero, i.e. $\sigma_{jk}=0$.
\item The first of the original conditions of Theorem~\ref{coth} stays the same:
\be(\rho_{11}\dots\rho_{dd})^T\dsm(\sigma_{11}\dots\sigma_{dd})^T.\ee
\item Before we state the condition, note that the $Q$ as defined in the theorem has diverging terms. We first take the following steps to construct an alternate \emph{family} of $Q$'s:
\begin{enumerate}
\item For all pairs of indices $(j,k)$ for whom $\rho_{jk}\ne0$, use the original definition of $Q_{jk}$.
\item For every $j$ such that $\rho_{jj}=0$, assign the value $0$ to all $Q_{jk}$ and $Q_{kj}$ (i.e., to the entire $j^{\textnormal{th}}$ row and column).
\item For every pair $(j,k)$ such that $\rho_{jk}=0$ and $Q_{jk}$ has not been set to zero in the previous step, allow $Q_{jk}$ to take any value.
\end{enumerate}
The revised second condition is that \emph{at least one} set of assignments in the last step lead to $Q\ge0$. In this sense, instead of one specific $Q$, we would now have to check a range of different $Q$'s. To minimize the complexity of this check, without loss of generality we can restrict each $Q_{jk}$ in the last step to be real and within the interval $\left[-(Q_{jj}Q_{kk})^{1/2},(Q_{jj}Q_{kk})^{1/2}\right]$.
\end{enumerate}
\end{prop}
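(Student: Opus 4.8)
The plan is to reduce everything to the Gramian characterization of Lemma~\ref{eqv}, exactly as in the proof of Theorem~\ref{coth}, and to track the two novelties introduced by vanishing entries of $\rho$: the decoupling of off-diagonal positions where $\rho_{jk}=0$, and the collapse of rows/columns where $\rho_{jj}=0$. Throughout, a cooling map $\rho\mapsto\sigma$ exists if and only if there is a positive-semidefinite Gramian $q$ whose diagonal is that of a UTCS matrix taking $\vc u\mapsto\vc v$ and whose off-diagonals satisfy $\sigma_{jk}=q_{jk}\rho_{jk}$.

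For the $\Rightarrow$ direction, suppose a cooling map realizes $\rho\mapsto\sigma$, with Gramian $q\ge0$ and UTCS $P$ as in Lemma~\ref{eqv}. Condition~0 is immediate: whenever $\rho_{jk}=0$ with $j\ne k$ we have $\sigma_{jk}=q_{jk}\rho_{jk}=0$. Condition~1 follows from Lemma~\ref{Pcon} applied to $P$. For Condition~2 I would first record a harmless normalization: if $\rho_{jj}=0$ then $\rho$, being positive, has its entire $j$-th row and column zero, so the diagonal Kraus vector $\vc\lambda_j$ and the $\{\mu_{ij}\}_{i<j}$ influence neither any $\sigma_{jk}$ nor any diagonal entry of $\sigma$ (each affected term carries a factor $\rho_{jj}=0$). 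Hence one may replace the given cooling map by one with $\vc\lambda_j=0$ — restoring column-$j$ stochasticity by any admissible choice of the $\mu_{ij}$ — whose Gramian $q$ then has a vanishing $j$-th row and column, matching step~(b). With this $q$, setting the free step-(c) entries of $Q$ equal to the corresponding $q_{jk}$ makes $Q$ and $q$ agree off the diagonal, while $Q_{jj}\ge q_{jj}=P_{j|j}$ (the UTCS bound gives $P_{j|j}\le\min(\sigma_{jj}/\rho_{jj},1)=Q_{jj}$ when $\rho_{jj}>0$, and $P_{j|j}=0=Q_{jj}$ when $\rho_{jj}=0$). Thus $Q=q+D$ with $D\ge0$ diagonal, so $Q\ge0$.

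For the $\Leftarrow$ direction, take a choice of free entries realizing $Q\ge0$ and set $q:=Q$. Lemma~\ref{Pcon} (converse), fed Condition~1, produces a UTCS $P$ with diagonal $P_{j|j}=\min(\sigma_{jj}/\rho_{jj},1)$ for $\rho_{jj}>0$ and $P_{j|j}=0$ for $\rho_{jj}=0$ — precisely the diagonal of $Q$ after step~(b). The off-diagonal requirement of Lemma~\ref{eqv} holds by construction where $\rho_{jk}\ne0$, and reduces to $0=q_{jk}\cdot0$ where $\rho_{jk}=0$, which is exactly Condition~0. Lemma~\ref{eqv} then returns a cooling map.

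The remaining assertion — that the search over free entries may be confined to \emph{real} values in $[-(Q_{jj}Q_{kk})^{1/2},(Q_{jj}Q_{kk})^{1/2}]$ — is where I expect the real work, and the main obstacle, to lie. The magnitude half is painless: the $2\times2$ principal minor forces $\abs{Q_{jk}}\le(Q_{jj}Q_{kk})^{1/2}$, so bounding the modulus costs nothing. The realness half, however, is a genuine positive-semidefinite completion question, and I would be prepared for it to fail in general. The feasible values of a single free entry $Q_{j\ell}$ bridged by fixed entries $Q_{jk},Q_{k\ell}$ fill a disk whose center has argument $\arg Q_{jk}+\arg Q_{k\ell}$; this center is generically off the real axis once the fixed coherences carry incompatible phases, so a PSD completion can exist with no real representative. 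My plan would therefore be to establish Conditions~0--2 as a \emph{complex} completion criterion (as above), and then to isolate additional hypotheses — e.g. a chordal or forest structure for the zero-pattern of $\rho$, or a common phase gauge for the fixed entries — under which the completion may be taken real; absent such hypotheses I would state the condition only up to the magnitude bound rather than claim a reduction to real entries.
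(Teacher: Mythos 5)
Your route---reducing everything to Lemma~\ref{eqv} and Lemma~\ref{Pcon}, exactly as in the proof of Theorem~\ref{coth}---is the intended one; in fact the paper states this Proposition \emph{without any proof}, so there is nothing more specific to compare against, and your argument for the equivalence over \emph{complex} assignments is essentially sound. It has one repairable gap, at $j=1$. If $\rho_{11}=0$, column $1$ of an upper-triangular column-stochastic matrix contains only the entry $P_{1|1}$, so $P_{1|1}=1$ is forced; hence the specific diagonal promised by Lemma~\ref{Pcon} ($P_{j|j}=0$ when $u_j=0$) is unattainable at $j=1$ (a slip in the paper's own lemma), and your device of setting $\vc\lambda_1=0$ and ``restoring column-$1$ stochasticity by the $\mu_{i1}$'' is unavailable, since there are no Kraus operators $J_{i1}$ with $i<1$. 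The repair is to argue at the level of the Gramian rather than the map: in the forward direction the compression $q\mapsto\Pi q\Pi$, with $\Pi:=\eins-\sum_{j:\rho_{jj}=0}\proj{E_j}$, zeroes the offending rows and columns while preserving positive semidefiniteness, and adding a nonnegative diagonal then yields $Q\ge0$; in the converse direction, when $\rho_{11}=0$ take $q:=Q+\proj{E_1}$ (still positive semidefinite), so that the diagonal of $q$ is that of an honest UTCS matrix, as Lemma~\ref{eqv} demands.

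Your second, more consequential point is also correct: the closing ``without loss of generality'' clause of the Proposition is false as stated, and you were right not to attempt it. Your diagnosis (the feasible disk of a free entry is centred off the real axis) becomes a concrete counterexample as follows. Take $d=3$, $\rho$ with diagonal $(1/2,1/4,1/4)$, $\rho_{12}=\rho_{13}=\epsilon>0$ small, $\rho_{23}=0$, and $\sigma$ with the same diagonal, $\sigma_{12}=0.9\epsilon$, $\sigma_{13}=0.9i\epsilon$, $\sigma_{23}=0$. Conditions 0 and 1 hold, $Q_{11}=Q_{22}=Q_{33}=1$, and the fixed entries are $Q_{12}=0.9$, $Q_{13}=0.9i$. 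A direct computation gives
\be
\det Q=\left(1-\abs{Q_{12}}^2\right)\left(1-\abs{Q_{13}}^2\right)-\abs{Q_{23}-\overline{Q_{12}}\,Q_{13}}^2,
\ee
and the remaining principal minors are automatically nonnegative on this set, so $Q\ge0$ precisely when $\abs{Q_{23}-0.81i}\le0.19$. The transition is therefore achievable by a cooling map (e.g.\ the assignment $Q_{23}=0.81i$ certifies it via Lemma~\ref{eqv}), yet the feasible disk misses the real axis: for real $Q_{23}=x$ one gets $\det Q=-(0.62+x^2)<0$. So the modulus bound is indeed harmless (your $2\times2$-minor argument), but realness is not. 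Your proposed remedy is the right one: state condition~2 over complex assignments, noting that realness can be restored only under extra hypotheses---for instance, if all fixed entries $\sigma_{jk}/\rho_{jk}$ are real, then $\bar Q\ge0$ whenever $Q\ge0$ and $\tfrac12(Q+\bar Q)$ is a real certificate; or after a diagonal phase gauge when the pattern of nonzero fixed entries contains no cycles. In short: your proof of the equivalence over $\bbC$ stands modulo the $j=1$ repair, and the clause you declined to prove is an error in the paper.
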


\subsection{Cooling maps and thermal operations}
Our motivation in constructing the cooling maps model was the fact that all (low-temperature) thermal operations are cooling maps (Observation~\ref{obs1}). Here we present some arguments that support the following conjecture:
\begin{conj}
Cooling maps are equivalent to low-temperature thermal operations, with regard to the feasibility of state transitions.
\end{conj}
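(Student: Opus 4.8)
The plan is to establish the converse of Observation~\ref{obs1}: given states $\rho,\sigma$ whose transition is feasible under a cooling map---equivalently, by Theorem~\ref{coth}, satisfying $\vc u\dsm\vc v$ and $Q\ge0$---exhibit a \emph{low-temperature thermal operation} implementing $\rho\mapsto\sigma$. The key simplification is Lemma~\ref{eqv}, which reduces the problem to data of two kinds: the Gram matrix $q\ge0$ carrying the coherence transfer, whose off-diagonals are forced to $q_{jk}=\sigma_{jk}/\rho_{jk}$ while the diagonals retain the slack permitted by stochasticity, together with an upper-triangular column-stochastic matrix $P$ reproducing the diagonal transition. Since we already know that \emph{optimally coherent} cooling maps---those with a rank-one Gram matrix, $q_{jk}=(q_{jj}q_{kk})^{1/2}$ up to phases---are thermal operations, and so are their probabilistic mixtures, the natural route is to write the required cooling map as such a mixture.

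A mixture would have $q=\sum_\alpha p_\alpha q^{(\alpha)}$, each $q^{(\alpha)}$ a rank-one positive-semidefinite matrix with diagonal entries in $[0,1]$ and $\sum_\alpha p_\alpha=1$. The construction splits into a \emph{classical layer} and a \emph{coherence layer}. For the classical layer I would fix the target $P$ supplied by Lemma~\ref{Pcon}, with diagonal $P_{j|j}=\min(\sigma_{jj}/\rho_{jj},1)=Q_{jj}$, and distribute the downward probability currents $\abs{\mu_{jk}}^2$ among the components so that each component is upper-triangular and column-stochastic while the mixture reproduces $P$. This layer is always solvable: column by column it is a \emph{balanced} transportation problem, with supplies $p_\alpha(1-q^{(\alpha)}_{kk})$ indexed by components $\alpha$ and demands $P_{j|k}$ indexed by the lower levels $j<k$, whose totals agree, $\sum_\alpha p_\alpha(1-q^{(\alpha)}_{kk})=1-Q_{kk}=\sum_{j<k}P_{j|k}$, so a nonnegative solution exists. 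Hence, once a suitable coherence-layer decomposition is in hand, the resulting mixture is thermal and implements the transition.

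The main obstacle is the coherence layer. Since the off-diagonals of $q$ are dictated by $\sigma_{jk}/\rho_{jk}$ and only the diagonal carries slack, one is asking whether some admissible completion of $q$ lies in the convex hull of rank-one positive-semidefinite matrices with diagonal at most $1$. That hull is \emph{strictly smaller} than the full body $\{q\ge0:q_{jj}\le1\}$: the extreme points of the latter include matrices of rank greater than one---already on the face of correlation matrices (the \emph{elliptope}), higher-rank extreme points appear once $d$ is large enough---and such a matrix, being extreme in the ambient body, cannot be a nontrivial convex combination of rank-one elements. Consequently mixtures of optimally coherent maps do \emph{not} exhaust the cooling maps, and this route settles the conjecture only in the regimes already covered (two-level systems, and optimally coherent maps and their mixtures), precisely the cases where $q$ is forced to be, or may be chosen, rank-one.

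To close the gap one would instead dilate a general cooling map directly: starting from its Kraus form with a ground-state bath, deform the isometry into a global unitary $U$ obeying $[U,H_{\cS\cA}]=0$. The hard part is meeting this commutation constraint together with the ground-state initial condition of the bath while realizing a \emph{higher-rank} Gram matrix---that is, reproducing coherence patterns that are not convex mixtures of the extremal, maximally coherent ones. I would expect genuine progress to require identifying a richer family of thermally realizable extremal cooling maps whose mixtures fill out $\{q\ge0:q_{jj}\le1\}$, or an explicit bath-engineering construction producing the needed correlations; absent such an ingredient, the decomposition obstruction above indicates that the conjecture cannot be reduced to the optimally coherent case alone.
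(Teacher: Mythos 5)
First, be clear that the statement you were asked to prove is not proved in the paper either: it appears there as a conjecture, supported only by special cases, and your proposal---which ends by conceding that your chosen route cannot close the gap---arrives at essentially the same verdict. Within that frame, your positive claims are sound. The reduction via Lemma~\ref{eqv} to a Gramian $q\ge0$ with prescribed off-diagonal entries together with a UTCS matrix $P$ is correct; your transportation argument for the classical layer works (e.g.\ the proportional allocation $P^{(\alpha)}_{j|k}=P_{j|k}\,(1-q^{(\alpha)}_{kk})/(1-Q_{kk})$ is a valid solution); and your obstruction is genuine, with one caveat: since the Gramian here is complex Hermitian, the bound on the rank $r$ of extreme points of the elliptope is $r^2\le d$ rather than the real bound $r(r+1)/2\le d$, so rank-two extreme points first appear at $d=4$ (your hedge ``once $d$ is large enough'' survives this). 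For such an extreme $Q^*$, choosing $\rho$ with all entries nonzero, $\sigma_{jj}=\rho_{jj}$ and $\sigma_{jk}=Q^*_{jk}\rho_{jk}$ forces $P=\eins$ and hence $q=Q^*$ exactly, giving a transition feasible under cooling maps that no mixture of optimally coherent processes can realize. You are therefore right that Corollaries~\ref{opco} and~\ref{mixt} alone cannot settle the conjecture.

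Where your account falls short of the paper is in its closing claim that progress beyond mixtures would require ``identifying a richer family of thermally realizable extremal cooling maps,'' as if none were available. The paper supplies exactly such a family through Property~\ref{conjal}: a cooling map with Gramian $Q$ of rank $g$ is a low-temperature thermal operation whenever there exist $g$ collections of vectors, all with Gramian $Q$, that are orthogonal across collections within each index---equivalently, for unit-diagonal $Q$, unitaries $W_1,\dots,W_d$ on $\bbC^g$ such that every product $W_j^\dagger W_k$ has constant diagonal equal to $Q_{jk}$. This handles Gramians of rank greater than one directly, by a single energy-conserving unitary that realizes the \emph{same} channel in every ground-state sector of the bath, with no decomposition into rank-one pieces: for instance $Q_{12}=0$, $Q_{13}=c$, $Q_{23}=s$ with $c^2+s^2=1$ (rank two) is realized by $W_1=\eins$, $W_2$ a rotation by $\pi/2$, and $W_3$ a rotation by $\arccos c$. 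This is precisely the ``direct dilation'' you gesture at in your last paragraph, developed in the paper into a concrete, checkable property of $Q$; what keeps the statement a conjecture---for the paper as for you---is that it is unknown whether every positive-semidefinite $Q$ satisfies Property~\ref{conjal} or is a mixture of Gramians that do. An assessment engaging with that condition, rather than stopping at the mixture obstruction, would have matched the paper's actual state of evidence.
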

Note that this could be true even if the set of cooling maps is strictly larger than that of thermal operations---there could still be a thermal operation achieving every state transition that is possible through cooling maps.

Consider some state transition $\rho\mapsto\sigma$ that is possible under cooling maps. By Theorem~\ref{coth} this corresponds to the existence of a certain $d\times d$ matrix $Q\ge0$ associated with a possible operator sum representation of a cooling map achieving the transition. Specifically, the diagonal Kraus operators in the representation are parametrized by a collection $(\vc\lambda_1\dots\vc\lambda_d)$ of vectors whose Gramian is $Q$. The $i^\textnormal{th}$ diagonal Kraus operator contains the $i^\textnormal{th}$ component of each of these vectors:
\be K_i=\left(\begin{array}{cccc}\lambda_1^{(i)}&0&\hdots&0\\
0&\lambda_2^{(i)}&0&\vdots\\
\vdots&0&\ddots&0\\
0&\hdots&0&\lambda_d^{(i)}
\end{array}\right).\ee
In addition, of course, there are the off-diagonal Kraus operators
\be J_{jk}=\mu_{jk}\ket j\bra k,\;j<k\in\{1\dots d\}.\ee
If the Gramian $Q$ has rank $g$, then a thermal operation implementation of $\cE$ must necessarily use an ancilla $\cA$ whose ground state has multiplicity \emph{at least $g$}. Recall Eq.~(\ref{thopd}): The action of a cooling map $\cE$ that uses an ancilla with a $g$-fold degenerate ground state can be written as a uniform mixture of $g$ CPTP maps in the following manner:
\be\cE(\rho)=\fr1{g}\sum_{t=1}^{g}\cE_t(\rho),\ee
where $\cE_t$ is defined as
\be\cE_t(\rho):=\Tr_\cA\left[U\left(\rho\otimes\proj{F_1;t}\right)U^\dagger\right].\ee
We can find a Kraus operator sum representation for each $\cE_t$ using the same principle as we did before:
\be\bra{E_j}K_{i(t)}\ket{E_k}:=\left(\bra{E_j}\otimes\bra{v_i}\right)U\left(\ket{E_k}\otimes\ket{F_1;t}\right),\ee
where $\{\ket{v_i}\}$ is an orthonormal basis on the space of the composite $\cS\cA$.

The task of finding a thermal operation implementation of $\cE$ boils down to the task of finding a single energy-conserving $U$ that can enable various $\cE_t$'s, which in turn are free to be any CPTP maps as long as their uniform mixture is the channel $\cE$. In some cases it is possible to construct a $U$ that makes each $\cE_t$ identical with $\cE$, thereby realizing the latter channel overall. In such a case, the same $Q$ is associated with all $\cE_t$'s, but the $\vc\lambda$'s themselves are not required to be fixed---we only require that their Gramian be $Q$. The Gramian of a collection of vectors is invariant under isometries, giving us some freedom to choose the Kraus operators that we use in decomposing $\cE$ for different $t$'s. Let $(\vc\lambda_{1(t)}\dots\vc\lambda_{d(t)})$ be the particular vectors that we use in the $t^\textnormal{th}$ decomposition. A $U$ that achieves this could plausibly (although not necessarily) act in the following manner:
\be U\left(\ket{E_k}\otimes\ket{F_1;t}\right)=\left(\sum_{s=1}^g\lambda_{k(t)}^{(s)}\ket{E_k}\otimes\ket{F_1;s}\right)+\left(\sum_{j<k}\mu_{jk}\ket{E_j}\otimes\ket{F_{jk};t}\right),\ee
where $F_{jk}-F_1=E_k-E_j$, and $\{\ket{F_{jk};1}\dots\ket{F_{jk};g}\}$ may be chosen to be an orthonormal set of eigenvectors in the energy level $F_{jk}$ (we are allowed to give arbitrary multiplicities to the energy levels of $H_\cA$, to suit our convenience).

The requirement that $U$ be unitary implies that the vectors $\{U\left(\ket{E_k}\otimes\ket{F_1;1}\right)\dots U\left(\ket{E_k}\otimes\ket{F_1;g}\right)\}$ be mutually orthogonal for each $k$. In terms of the $\vc\lambda$'s, this amounts to
\be\left\langle\vc\lambda_{k(t)},\vc\lambda_{k(s)}\right\rangle\propto\delta_{ts}.\ee
On the other hand, the Gramian of each collection $(\vc\lambda_{1(t)}\dots\vc\lambda_{d(t)})$ must be $Q$. This is equivalent to the requirement that these collections all be mutually connected by isometries. This condition can be phrased as a property of $Q$:
\begin{prty}\label{conjal}
For the given $d\times d$ matrix $Q$ of rank $g$, there exist $g$ sets of $d$ vectors each, indexed as $(\vc\lambda_{1(t)}\dots\vc\lambda_{d(t)})_{t\in\{1\dots g\}}$, such that
\be\left\langle\vc\lambda_{j(t)},\vc\lambda_{k(t)}\right\rangle=Q_{jk}\ee
for all $j,k\in\{1\dots d\}$ and $t\in\{1\dots g\}$, and
\be\left\langle\vc\lambda_{k(t)},\vc\lambda_{k(s)}\right\rangle\propto\delta_{ts}\ee
for all $k\in\{1\dots d\}$ and $s,t\in\{1\dots g\}$.
\end{prty}
For any $Q$ with this property, we can construct an energy-conserving $U$ as discussed above, therefore qualifying the associated cooling map as a (low-temperature) thermal operation.

It is easy to verify that Property~\ref{conjal} is possessed by any $Q$ in the case $d=2$. Thus we have the following.
\begin{coro}\label{c2l}
Cooling maps are equivalent to low-temperature thermal operations on two-level systems.
\end{coro}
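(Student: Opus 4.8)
The plan is to invoke Observation~\ref{obs1} to reduce the claim to one direction: since every low-temperature thermal operation is already a cooling map, it suffices to show that on a two-level system every cooling map can be realized as a thermal operation. By Supplementary Lemma~\ref{eqv} and Theorem~\ref{coth} such a cooling map is encoded by a $2\times2$ positive-semidefinite matrix $Q$ (together with the off-diagonal weights $\mu_{jk}$ that column-stochasticity fixes), and the discussion preceding the corollary shows it is enough to verify Property~\ref{conjal} for this $Q$: once the required families of vectors exist, the displayed ansatz for $U$ furnishes an energy-conserving unitary that exhibits the cooling map as a thermal operation.

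The whole task therefore collapses to checking Property~\ref{conjal} when $d=2$, and I would split on the rank $g$ of $Q$. When $g\le1$ the orthogonality requirement $\left\langle\vc\lambda_{k(t)},\vc\lambda_{k(s)}\right\rangle\propto\delta_{ts}$ is vacuous (there is at most one value of $t$), so it suffices to realize $Q$ as the Gramian of a single pair of scalars or vectors, which is always possible for a positive-semidefinite matrix; these cases are immediate.

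The only substantive case is $g=2$, where $Q$ is strictly positive definite, i.e.\ $Q_{11},Q_{22}>0$ and $Q_{11}Q_{22}-\abs{Q_{12}}^2>0$. Here I need two pairs of vectors in $\bbC^2$, each with Gramian $Q$, with $\vc\lambda_{1(1)},\vc\lambda_{1(2)}$ mutually orthogonal and likewise $\vc\lambda_{2(1)},\vc\lambda_{2(2)}$. I would fix an orthonormal basis $\{\vc e_1,\vc e_2\}$ and set $\vc\lambda_{1(1)}=\sqrt{Q_{11}}\,\vc e_1$ and $\vc\lambda_{1(2)}=\sqrt{Q_{11}}\,\vc e_2$, which already meets the $k=1$ orthogonality and the norm $Q_{11}$. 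The within-collection conditions $\left\langle\vc\lambda_{1(t)},\vc\lambda_{2(t)}\right\rangle=Q_{12}$ then pin the $\vc e_t$-component of $\vc\lambda_{2(t)}$ to $Q_{12}/\sqrt{Q_{11}}$, leaving one free component in each of $\vc\lambda_{2(1)},\vc\lambda_{2(2)}$. The norm constraint $\nrm{\vc\lambda_{2(t)}}^2=Q_{22}$ fixes the magnitude of each free component to $\sqrt{(Q_{11}Q_{22}-\abs{Q_{12}}^2)/Q_{11}}$, which is real and positive precisely because $Q$ is positive definite; the remaining freedom is the phase of each free component, and since the two have equal magnitude the orthogonality $\left\langle\vc\lambda_{2(1)},\vc\lambda_{2(2)}\right\rangle=0$ can always be enforced by a suitable choice of relative phase, completing the construction.

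The main obstacle is exactly this full-rank case: all three inner-product conditions, the two norm conditions, and the cross-$t$ orthogonality must hold simultaneously for vectors confined to the $g$-dimensional ground space (here $\bbC^2$), and it is the strict positivity $Q_{11}Q_{22}>\abs{Q_{12}}^2$ that supplies just enough slack to fix the free magnitudes while leaving a phase to absorb the orthogonality. I would stress that this balance of constraints against available phases and components is special to $d=2$; for $d\ge3$ the analogous system of orthogonality conditions on vectors in $\bbC^g$ becomes over-determined, which is the heuristic reason the general equivalence remains a conjecture rather than a theorem.
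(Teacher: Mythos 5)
Your proposal is correct and follows essentially the same route as the paper: reduce via Observation~\ref{obs1} to showing every two-level cooling map is thermal, and then verify Property~\ref{conjal} for an arbitrary $2\times2$ positive-semidefinite $Q$ so that the displayed ansatz for $U$ applies. The paper merely asserts this verification is ``easy''; your explicit rank-split construction (pinning the $\vc e_t$-components by the Gramian conditions, fixing the free magnitudes via positive definiteness, and absorbing cross-$t$ orthogonality into a relative phase) is a correct filling-in of exactly that step.
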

Recently, \'Cwikli\'nski \textit{et al.} \cite{SCoh} found the conditions for two-level systems at any temperature. Our conditions match the low-temperature limit of theirs.

Another special case where Property~\ref{conjal} obviously follows is when $Q$ is diagonal, and correspondingly, the final state $\sigma$ in the associated thermal operation is diagonal. Therefore, the physical context of this special case is a process wherein the coherences present in the initial state are completely lost. Perhaps this is not a very useful sort of process, but the next special case lies at the opposite extreme, and is therefore---presumably---extremely useful.

If $Q$ has rank $1$, then again it is straightforward to see that Property~\ref{conjal} holds. In order to understand the physical significance of this special case, consider again a generic cooling map $\cE$ with Kraus operators
\begin{align}
K_i&=\left(\begin{array}{cccc}\lambda_1^{(i)}&0&\hdots&0\\
0&\lambda_2^{(i)}&0&\vdots\\
\vdots&0&\ddots&0\\
0&\hdots&0&\lambda_d^{(i)}
\end{array}\right),\;i\in\{1\dots n\};\nonumber\\
J_{jk}&=\mu_{jk}\ket j\bra k,\;j<k\in\{1\dots d\}.\nonumber
\end{align}
The effect of $\cE$ on the off-diagonal elements of states [cf. Eq.~(\ref{acte})] is given by
\be\rho_{jk}\mapsto\sigma_{jk}=\left\langle\vc\lambda_j,\vc\lambda_k\right\rangle\rho_{jk}.\ee
By the Cauchy--Schwarz inequality,
\begin{align}\label{cbnd}
\sigma_{jk}&\le\left(\left\langle\vc\lambda_j,\vc\lambda_j\right\rangle\left\langle\vc\lambda_k,\vc\lambda_k\right\rangle\right)^{1/2}\rho_{jk}\nonumber\\
&=\left(P_{j|j}P_{k|k}\right)^{1/2}\rho_{jk},
\end{align}
where $P$ is the stochastic matrix governing the transformation of the diagonal elements [cf. Eq.~(\ref{Pdef})]. This bound on coherence transfer in thermal operations was also derived, for all temperatures, by \'Cwikli\'nski \textit{et al.} in Ref.~\cite{SCoh}.

If the $\vc\lambda_j$'s are all pairwise linearly dependent (which is equivalent to their Gramian $Q$ being rank-$1$), then the inequality is saturated for every pair $(j,k)$. It is obvious that in such a case the ``vectors" $\vc\lambda_j$ can be chosen to be one-dimensional (i.e., scalars) and so just one diagonal Kraus operator suffices. Therefore, of all cooling maps whose associated stochastic matrix has a given diagonal, the ones with operator sum representations comprising only one diagonal Kraus operator achieve \emph{maximal coherence transfer} from the initial state to the final state. This motivates us to make the following definition:
\begin{definition}[Optimally coherent process]
A cooling map with an operator sum decomposition consisting of exactly one diagonal Kraus operator.
\end{definition}
The fact that Property~\ref{conjal} holds for such cases immediately implies
\begin{coro}\label{opco}
All optimally coherent processes are low-temperature thermal operations.
\end{coro}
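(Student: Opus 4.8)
The plan is to deduce the corollary directly from the machinery already assembled in the preceding discussion. There it is shown that whenever the Gramian $Q$ associated with a cooling map $\cE$ satisfies Property~\ref{conjal}, one can build an energy-conserving unitary $U$ that realizes $\cE$ as a low-temperature thermal operation. So the entire task reduces to checking that the $Q$ arising from an optimally coherent process satisfies Property~\ref{conjal}.

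First I would pin down the $Q$ of an optimally coherent process. By definition such a process has exactly one diagonal Kraus operator, so each $\vc\lambda_j$ is a single scalar $\lambda_j$, and the Gramian $Q_{jk}=\langle\vc\lambda_j,\vc\lambda_k\rangle$ is the Gram matrix of the one vector $(\lambda_1\dots\lambda_d)^T$. Hence $Q$ has rank at most one, i.e. $g=\mathrm{rank}(Q)\le1$, matching the equivalence already recorded before the definition: optimal coherence transfer (saturation of the Cauchy--Schwarz bound for every pair) corresponds exactly to a rank-one Gramian.

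Next I would verify Property~\ref{conjal} in the case $g=1$. Here the index $t$ ranges over the single value $t=1$, so the property asks only for one collection $(\vc\lambda_{1(1)}\dots\vc\lambda_{d(1)})$ whose Gramian equals $Q$; such a collection exists by the standard fact (recalled before Supplementary Lemma~\ref{eqv}) that every positive-semidefinite matrix is a Gramian, and for rank-one $Q$ one may simply reuse the scalars $\lambda_j$. The second, more delicate requirement $\langle\vc\lambda_{k(t)},\vc\lambda_{k(s)}\rangle\propto\delta_{ts}$ collapses, for $s=t=1$, to the trivially true relation $\langle\vc\lambda_{k(1)},\vc\lambda_{k(1)}\rangle\propto\delta_{11}$. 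Thus Property~\ref{conjal} holds, the unitary $U$ of the general construction applies, and $\cE$ is a low-temperature thermal operation.

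I do not expect any genuine obstacle: the only ingredient of Property~\ref{conjal} that resists proof in the general (conjectural) case is the mutual-orthogonality condition coupling the distinct copies $t\ne s$ of the vector families, and the rank-one hypothesis makes that condition vacuous by shrinking the number of copies to one. The degenerate rank-zero case ($Q=0$, where all coherences are destroyed) is handled identically, or follows even more directly since then the final state of $\cE$ is diagonal.
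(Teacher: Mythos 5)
Your proof is correct and takes essentially the same route as the paper: the single diagonal Kraus operator forces the Gramian $Q$ to have rank one, Property~\ref{conjal} then holds trivially because $g=1$ requires only a single vector family (which exists since $Q\ge0$) and renders the cross-orthogonality condition vacuous, and the general energy-conserving unitary construction then realizes the map as a low-temperature thermal operation. One negligible quibble: your rank-zero aside describes a case that cannot occur, since upper-triangularity and trace preservation force $Q_{11}=1$, but this does not affect the argument.
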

\'Cwikli\'nski \textit{et al.} constructed examples of thermal processes (at general temperatures) where the bound (\ref{cbnd}) is unattainable. Our above result shows that their no-go does not hold at low temperatures, where optimal coherence transfer is always possible.

Note that every optimally coherent process achieves maximal coherence transfer \emph{given the particular diagonal elements of the associated stochastic matrix $P$}. There is an additional sense in which optimization can be achieved: We can make the diagonal elements of $P$ as large as possible. We make this idea rigorous in the following:
\begin{coro}\label{opopco}
Let two states $\rho$ and $\sigma$ satisfy:
\begin{enumerate}
\item $(\rho_{11}\dots\rho_{dd})^T\dsm(\sigma_{11}\dots\sigma_{dd})^T$;
\item The $Q$ for the pair, as defined in Theorem~\ref{coth}, exists and is positive-semidefinite and rank-$1$.
\end{enumerate}
Then,
\begin{enumerate}
\item There exists a thermal operation taking $\rho\mapsto\sigma$. Furthermore,
\item For any state $\sigma'$ such that
\be\sigma'_{jj}=\sigma_{jj}\ee
for all $j$ and $\rho\mapsto\sigma'$ is possible under cooling maps, it holds that
\be\abs{\sigma'_{jk}}\le\abs{\sigma_{jk}}\ee
for every $j\ne k$.
\end{enumerate}
In other words, for every pair $(\rho,\sigma')$ such that $\rho\mapsto\sigma'$ is possible under \emph{cooling maps}, $\rho\mapsto\sigma$ is possible under \emph{thermal operations}, where $\sigma$ has the same diagonal part as $\sigma'$ but the \emph{largest possible off-diagonal elements for the given diagonal} obtainable through cooling maps from the given initial state $\rho$.
\end{coro}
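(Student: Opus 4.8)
The plan is to dispatch the two assertions separately, leaning on results already in hand. For the first assertion, I would observe that the two hypotheses are precisely the conditions of Theorem~\ref{coth}, so the $(\Leftarrow)$ direction of that theorem produces a cooling map taking $\rho\mapsto\sigma$, and the construction there realizes it with Gramian $q=Q$. Since $Q$ is positive-semidefinite and rank-$1$, it is the Gramian of a collection of one-dimensional vectors (scalars), so the operator sum needs only a single diagonal Kraus operator. That is exactly the definition of an optimally coherent process, whence Corollary~\ref{opco} immediately upgrades the map to a thermal operation.

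For the second assertion I would chain two inequalities. Let $\sigma'$ be any state sharing the diagonal of $\sigma$ and reachable from $\rho$ by a cooling map, with associated vectors $\vc\lambda_j$ and stochastic matrix $P$. The Cauchy--Schwarz bound~(\ref{cbnd}) gives $\abs{\sigma'_{jk}}\le\left(P_{j|j}P_{k|k}\right)^{1/2}\abs{\rho_{jk}}$. Because $\sigma'$ and $\sigma$ share a diagonal, $P$ is a UTCS matrix carrying the diagonal of $\rho$ to that of $\sigma$, so the diagonal bound $P_{j|j}\le\min\left(\sigma_{jj}/\rho_{jj},1\right)=Q_{jj}$ established in the proof of Theorem~\ref{coth} (via Lemma~\ref{Pcon}) applies, yielding $\abs{\sigma'_{jk}}\le\left(Q_{jj}Q_{kk}\right)^{1/2}\abs{\rho_{jk}}$.

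It then remains to identify the right-hand side with $\abs{\sigma_{jk}}$, and here I would invoke the rank-$1$ hypothesis on $Q$: a rank-$1$ positive-semidefinite matrix saturates Cauchy--Schwarz entrywise, so $\abs{Q_{jk}}=\left(Q_{jj}Q_{kk}\right)^{1/2}$. Since $Q_{jk}=\sigma_{jk}/\rho_{jk}$ by definition, this gives $\abs{\sigma_{jk}}=\left(Q_{jj}Q_{kk}\right)^{1/2}\abs{\rho_{jk}}$, and combining with the previous estimate closes the argument. The only genuine content is recognizing that $\sigma'$ sharing $\sigma$'s diagonal forces the \emph{same} $Q_{jj}$ to bound $P_{j|j}$; everything else is assembly of Theorem~\ref{coth}, Corollary~\ref{opco}, and the coherence-transfer bound. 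I expect no serious obstacle, the main care being to keep straight the distinction between the optimal map (Gramian exactly $Q$, rank $1$) and the competitor map for $\sigma'$, whose Gramian need not be rank $1$ but whose diagonal is nonetheless governed by the same $Q_{jj}$.
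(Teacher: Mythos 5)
Your proposal is correct and follows essentially the same route as the paper: the first assertion via Theorem~\ref{coth} plus the rank-$1$ property (your appeal to Corollary~\ref{opco} is the same step the paper makes through Property~\ref{conjal}), and the second via the Cauchy--Schwarz bound~(\ref{cbnd}) combined with the maximality of the diagonal entries $Q_{jj}$ over all admissible UTCS matrices. Your write-up merely makes explicit the chain of inequalities that the paper states tersely (in particular the entrywise saturation $\abs{Q_{jk}}=(Q_{jj}Q_{kk})^{1/2}$ for rank-$1$ positive-semidefinite $Q$), which is a faithful elaboration rather than a different argument.
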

\begin{proof}
Since $\rho$ and $\sigma$ satisfy the conditions of Theorem~\ref{coth}, it follows, of course, that $\rho\mapsto\sigma$ is possible through a cooling map. In fact, since the associated $Q$ has rank $1$, Property~\ref{conjal} holds and therefore the transition is possible through a \emph{thermal operation}, proving the first assertion.

The rank-$1$ property also implies that the transition is possible by an \emph{optimally coherent process}, therefore guaranteeing optimal coherence transfer for the given diagonal part of the associated stochastic matrix $P$. However, since the $Q$ constructed in Theorem~\ref{coth} has maximal diagonal elements for the given diagonal part of the final state, so does $P$, and the second assertion follows.
\end{proof}
We saw that any optimally coherent process is a thermal operation, as is any ``coherence-killing" process. In fact, these are both special cases of a stronger result:
\begin{coro}\label{mixt}
Any mixture of optimally coherent processes can be approximated arbitrarily well by a thermal operation.
\end{coro}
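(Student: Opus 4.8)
The plan is to build directly on Corollary~\ref{opco}, which already tells us that each individual optimally coherent process is a low-temperature thermal operation, and then to argue that any convex mixture of thermal operations can itself be realized, up to arbitrarily small error, as a single thermal operation. The resource that makes this possible is the low-temperature bath itself: we are free to assign any degeneracy we like to an ancilla's ground level, and at low temperature that ancilla's Gibbs state is the maximally mixed state over its ground subspace [cf.\ Eq.~(\ref{lowt3})]. The bath therefore supplies a natural source of \emph{uniform} classical randomness. Indeed, the derivation around Eq.~(\ref{thopd}) already shows that a thermal operation whose ancilla has a $g_1$-fold degenerate ground level implements the uniform mixture $\fr1{g_1}\sum_t\cE_t$, where $\cE_t$ is the channel obtained by initializing the ancilla in the pure ground state $\ket{F_1;t}$. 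The whole task reduces to choosing the ancilla and the energy-conserving unitary so that these $\cE_t$ are the prescribed optimally coherent processes.

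First I would treat a uniform mixture $\fr1N\sum_{t=1}^N\cE_t$. By Corollary~\ref{opco}, each $\cE_t$ is realized by an energy-conserving unitary $U_t$ on $\cS$ together with a work ancilla $\cA_t$ in its ground state. I would take the composite ancilla to be $\cA'\otimes\mathrm C$, with $\cA'=\bigotimes_t\cA_t$ and a control register $\mathrm C$ whose Hamiltonian $H_{\mathrm C}$ has an $N$-fold degenerate ground level $\{\ket{F_1;t}\}_{t=1}^N$, and define the controlled unitary
\be U=\sum_{t=1}^N U_t\otimes\proj{F_1;t},\ee
where each $U_t$ is extended to act as the identity on the work ancillas $\cA_{t'}$ with $t'\ne t$. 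Because all control states $\ket{F_1;t}$ share the common ground energy, $U$ commutes with $H_{\mathrm C}$, while its commutation with $H_\cS\otimes\eins+\eins\otimes H_{\cA'}$ reduces to the energy conservation of each individual $U_t$; hence $U$ is energy-conserving overall. Feeding in the low-temperature Gibbs state of $\cA'\otimes\mathrm C$---the product of each work ancilla's ground state with the maximally mixed control state $\fr1N\sum_t\proj{F_1;t}$---and tracing out the ancilla reproduces exactly $\fr1N\sum_t\cE_t(\rho)$, precisely as in Eq.~(\ref{thopd}).

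Finally I would extend to arbitrary weights $\{p_m\}$ by approximation. Any mixture $\sum_m p_m\cE_m$ can be approximated arbitrarily well by one with rational weights $p_m\approx k_m/N$, $\sum_m k_m=N$; the latter is a uniform mixture of $N$ optimally coherent processes in which $\cE_m$ is repeated $k_m$ times, and is therefore a thermal operation by the construction above. Since $\cE\mapsto\cE(\rho)$ depends continuously on the weights, letting $N\to\infty$ yields a thermal operation whose output approximates that of the desired mixture to within any prescribed accuracy (say, in trace distance), which is exactly the claim.

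The step I expect to be the main obstacle is verifying the energy conservation of the controlled unitary $U$ and confirming that the separate work ancillas can be amalgamated without spoiling it. This is precisely where the degeneracy of the control ground level is indispensable: only the equality of all control energies allows the block-diagonal $U$ to commute with $H_{\mathrm C}$, and it is the uniform weighting of the degenerate ground subspace that forces the mixture to be uniform---this is the structural reason the corollary asserts only \emph{approximate} realizability, with non-uniform weights recovered through the rational-approximation step rather than exactly.
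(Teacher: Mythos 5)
Your proposal is correct, and its skeleton matches the paper's proof: the uniform randomness comes from a degenerate ground level of a thermal ancilla, rational weights $k_m/N$ are obtained by repeating components in a uniform mixture, and the density of the rationals yields the approximation claim. Where you genuinely differ is in how the unitary is realized. The paper builds one explicit energy-conserving unitary on a \emph{single} ancilla whose $g$-fold degenerate ground labels do double duty as mixture index and output memory, via $U\left(\ket{E_k}\otimes\ket{F_1;t}\right)=\lambda_k^{(i_t)}\ket{E_k}\otimes\ket{F_1;t}+\sum_{j<k}\mu_{jk}^{(i_t)}\ket{E_j}\otimes\ket{F_{jk};t}$, and verifies unitarity directly from the orthogonality of these images for distinct $t$ (this is exactly where the single-diagonal-Kraus-operator structure of optimally coherent processes enters). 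You instead use Corollary~\ref{opco} as a black box: tensor together the individual realizations $(U_t,\cA_t)$ and condition on a separate control register with degenerate ground level, $U=\sum_t U_t\otimes\proj{F_1;t}$. Your modular packaging makes the energy-conservation check transparent (block-diagonality over the degenerate control subspace plus energy conservation of each $U_t$) and cleanly separates the ``randomness source'' from the ``energy sink,'' at the cost of a larger ancilla; the paper's construction is leaner and exhibits the minimal ancilla structure, namely ground degeneracy equal to the number of mixed components. Two small points you should tidy: first, if $H_{\mathrm C}$ has excited levels, then $U$ as written is not unitary on all of $\cS\otimes\cA'\otimes\mathrm{C}$ --- either take $H_{\mathrm C}$ fully degenerate (allowed, since the ancilla Hamiltonian is arbitrary) or extend $U$ by the identity on the excited subspace; second, you implicitly use that the Gibbs state of the composite ancilla factorizes as $\gamma_{\cA'}\otimes\gamma_{\mathrm C}$, which holds because the composite ancilla Hamiltonian is the uncoupled sum, and is worth stating.
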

\begin{proof}
We will prove that any rational convex combination of optimally coherent processes is a thermal operation. By the density of the rationals among the reals, the main claim will follow.

Let a cooling map $\cE$ be decomposable as a rational convex combination of optimally coherent processes:
\be\cE(\cdot)=\sum_{i=1}^n\fr{m_i}g\cE_i(\cdot),\ee
where $m_i$ and $g=\sum_im_i$ are positive integers and each $\cE_i$ is an optimally coherent process with Kraus operators
\begin{align}
K_i&=\left(\begin{array}{cccc}\lambda_1^{(i)}&0&\hdots&0\\
0&\lambda_2^{(i)}&0&\vdots\\
\vdots&0&\ddots&0\\
0&\hdots&0&\lambda_d^{(i)}
\end{array}\right);\nonumber\\
J_{jk}&=\mu_{jk}^{(i)}\ket j\bra k,\;j<k\in\{1\dots d\}.\nonumber
\end{align}

To realize $\cE$ as a thermal operation, we can use an ancilla $\cA$ that has a $g$-fold degenerate ground energy level $F_1$. Let $\{\ket{F_1;1}\dots\ket{F_1;g}\}$ be an orthonormal basis spanning this ground space. As we argued before, we can allow arbitrary degeneracies in the excited states of $\cA$ and take advantage of them. We use an energy-conserving unitary $U$ that satisfies
\be U\ket{E_k}\otimes\ket{F_1;t}=\lambda_k^{(i_t)}\ket{E_k}\otimes\ket{F_1;t}+\sum_{j<k}\mu_{jk}^{(i_t)}\ket{E_j}\otimes\ket{F_{jk};t},\ee
where $i_t=1$ for $t\le m_1$, $i_t=2$ for $m_1<t\le m_1+m_2$, etc. Since these states are orthogonal for different $t$'s by construction, it follows that such a unitary always exists. One may verify that the action of the resulting thermal operation on any input is identical with that of the given cooling map $\cE$.
\end{proof}
In the next supplementary note we will consider Gibbs-preserving operations, which in the low-temperature limit are defined by the constraint
\be\cE\left(\proj{E_1}\right)=\proj{E_1}.\ee
It is obvious that the set of low-temperature Gibbs-preserving operations is strictly larger than the set of cooling maps. Before moving on, let us summarize our findings on the various sets of operations that we have considered, through their inclusion hierarchy:
\begin{align}
&\left\{\textnormal{Optimally coherent processes}\right\}\nonumber\\
\subsetneq&\left\{\textnormal{Mixtures of optimally coherent processes}\right\}\nonumber\\
\subseteq&\left\{\textnormal{Low-temperature thermal operations}\right\}\nonumber\\
\subseteq&\left\{\textnormal{Cooling maps}\right\}\nonumber\\
\subsetneq&\left\{\textnormal{Low-temperature Gibbs-preserving operations}\right\}.\nonumber
\end{align}
Fig.~1 of the main text depicts a visualization of this hierarchy.

\section{Gibbs-preserving operations}
By constructing the cooling maps model we were able to get some elegant results about thermal operations. However, this reduction was made possible by the simplifying condition of low temperature. In general, when the temperature is arbitrary, thermal operations are not very yielding to elegant mathematical treatment, owing to their operational definition. In contrast, consider the following definition:
\begin{definition}[Gibbs-preserving operation]
A quantum channel $\cE$ that fixes the Gibbs state:
\be\cE(\gamma_\cS)=\gamma_\cS.\ee
\end{definition}
This definition is much more mathematically direct, and so it would seem that a model wherein the allowed processes are the Gibbs-preserving operations would lend itself better to mathematical treatment. Even if one believes that such a model is not physically motivated, and rather prefers the thermal operations model, the study of the former holds some utility. From the definition of thermal operations, it is obvious that all thermal operations are Gibbs-preserving. Therefore, by studying the Gibbs-preserving model, one could potentially gain some understanding of the more challenging thermal operations model.

Here we study the low-temperature limit of the Gibbs-preserving operations, both for its own sake and in order to see how similar the results will be to the ones we obtained from cooling maps. This will give us a sense of how close the Gibbs-preserving model might be to thermal operations at higher temperatures, where we do not yet have any mathematically amenable approximation like the cooling maps.
\subsection{The low-temperature approximation}\label{glowt}
Here the low-temperature limit is simpler to conceptualize than in the thermal operations case. We can define the lowness of temperature directly in terms of the system of interest $\cS$, instead of having to refer to the properties of the environment. If, as before, $\cS$ is a $d$-level system governed by a Hamiltonian $H_\cS$ with the non-degeneracy properties listed earlier, we can formalize the low-temperature assumption as follows:
\be k_BT\ll E_2-E_1.\ee
This leads to
\begin{equation}\label{lowtg}
\gamma_\cS\approx\proj{E_1},
\end{equation}
which will be the form in which we will use the approximation.
\subsection{Allowed operations and the canonical parametrization}
The low-temperature approximation Eq.~(\ref{lowtg}) leads to the following criterion for an evolution $\cE$ to be allowed:
\be\cE\left(\proj{E_1}\right)\approx\proj{E_1}.\ee
It is clear that the subspace spanned by $\ket{E_1}$ is treated in a privileged manner in this model. We will see this more rigorously in the upcoming sections, but in anticipation we propose the following ``canonical parametrization" of a generic state of $\cS$:
\be\rho=\left(\begin{array}{c|c}
\alpha&\vc x^\dagger\\
\hline\vc x&A\end{array}\right),\ee
where $\alpha:=\bra{E_1}\rho\ket{E_1}\ge0$ is a real scalar, $\vc x$ is a complex $(d-1)$-dimensional vector, and $A$ is a $(d-1)$-dimensional subnormalized density operator. We can identify a state with its associated set of parameters, as $\rho\equiv(\alpha,\vc x,A)$.

\subsection{The Schur complement construction}\label{schur}
The following construction will be useful in the subsequent analysis. For the present, assume for simplicity that $A$ is invertible, noting that the argument can easily be adapted to the singular case. Let
\be K_{A,\vc x}:=\left(\begin{array}{cc}1&-\vc x^\dagger A^{-1}\\0&\eins_{d-1}\end{array}\right).\ee
The map
\be\cE_{A,\vc x}:M\mapsto\cE_{A,\vc x}(M):=K_{A,\vc x}MK_{A,\vc x}^\dagger\ee
is CP. It is also invertible \footnote{Inconveniently, the map $\cE_{A,\vc x}$, while \emph{algebraically invertible}, is \emph{not functionally invertible}: its inversion requires information about $\vc x$ that is not contained in $D_\rho$ itself!}, with inverse given by the (also CP) map $\cE_{A,-\vc x}$. Its action on $\rho$ gives
\be D_\rho:=\cE_{A,\vc x}(\rho)=\left(\begin{array}{cc}\alpha-\vc x^\dagger A^{-1}\vc x&0\\0&A\end{array}\right).\ee
From the CP property of $\cE_{A,\vc x}$ and its inverse, it follows that $\rho\ge0$ is equivalent to
\begin{align}
A&\ge0,\nonumber\\
\alpha-\vc x^\dagger A^{-1}\vc x&\ge0.\nonumber
\end{align}
The quantity
\be c_\rho:=\alpha-\vc x^\dagger A^{-1}\vc x\ee
is called \emph{the Schur complement of block $A$ in the matrix $\rho$}.

In order to understand how to treat cases where $A$ is singular, note that the block $A$ in the matrix of $\rho$ can always be diagonalized by a unitary matrix of the form
\be U=\left(\begin{array}{c|c}1&0\\\hline0&V\end{array}\right),\ee
which is an allowed unitary under Gibbs-preserving operations. Since unitary operations are reversible, without loss of generality we can assume diagonal $A$ in the canonical representation
\be\rho=\left(\begin{array}{c|c}
\alpha&\vc x^\dagger\\
\hline\vc x&A\end{array}\right).\ee
If a diagonal $A$ is singular, it has some zeroes on its diagonal. But for $\rho$ to be positive-semidefinite, the components of $\vc x$ in the corresponding rows must also be zero. Therefore the quantity $\vc x^\dagger A^{-1}\vc x$ can be given a well-defined value, by considering only the terms coming from the nonzero components of $\vc x$.

\subsection{The action of allowed operations on states}
Let us characterize Gibbs-preserving operations in terms of the possible Kraus operator decompositions that they can have. If an allowed channel $\cE$ has an operator sum representation comprising the Kraus operators $\left\{K_1\dots K_r\right\}$, the requirement of fixing $\proj{E_1}$ leads to the general form
\be K_i=\left(\begin{array}{c|c}\eta_i&\vc v_i^\dagger\\
\hline0&L_i\end{array}\right).\ee
Here $\eta_i\in\bbC$, $\vc v_i\in\bbC^{d-1}$, and $L_i\in\bbC^{(d-1)\times(d-1)}$. The trace-preserving condition on $\cE$ implies that
\begin{align}
\sum_i\abs{\eta_i}^2&=1,\nonumber\\
\sum_i\eta_i\vc v_i&=\vc0,\nonumber\\
\sum_i\left(\vc v_i\vc v_i^\dagger+L_i^\dagger L_i\right)&=\eins_{d-1}.\label{krtp}
\end{align}
The action of the channel $\cE$ on a state $\rho\equiv(\alpha,\vc x,A)$ gives
\be\cE(\rho)=:\sigma\equiv(\beta,\vc y,B),\ee
where
\begin{align}
\beta&=\alpha+\sum_i\vc v_i^\dagger A\vc v_i;\nonumber\\
\vc y&=\left(\sum_i\eta_i^*L_i\right)\vc x+\sum_iL_iA\vc v_i;\nonumber\\
B&=\sum_iL_iAL_i^\dagger.\label{kaction}
\end{align}

Recall the Schur complement construction, which associates with each state $\rho$ a block-diagonal matrix $D_\rho$. Associated with the final state $\sigma$ we have $D_\sigma$. The transformation from $D_\rho$ to $D_\sigma$ can be thought of as the action of the CP map
\be\Lambda_\cE:=\cE_{B,\vc y}\circ\cE\circ\cE_{A,-\vc x}.\ee
The action of $\Lambda_\cE$ can be decomposed using the Kraus operators
\be J_i=K_{B,\vc y}K_iK_{A,-\vc x}.\ee
We find that, by virtue of the structure of the $K_i$'s, the $J_i$'s have the same form:
\be J_i=\left(\begin{array}{c|c}\eta_i&\vc u_i^\dagger\\
\hline0&L_i\end{array}\right).\ee
This leads to
\be D_\sigma=\Lambda_\cE(D_\rho)=\left(\begin{array}{cc}c_\sigma&0\\0&B\end{array}\right),\ee
where
\begin{equation}\label{schaction}
c_\sigma=c_\rho+\sum_i\vc u_i^\dagger A\vc u_i.
\end{equation}

\subsection{Monotones under Gibbs-preserving operations}\label{gmon}
Monotones are real-valued functions of the state that vary monotonically (non-increasingly or non-decreasingly) under the allowed operations. For example, in classical thermodynamics, the free energy is a monotone. Here we note a couple of monotones under Gibbs-preserving operations. By virtue of the positive-semidefiniteness of the block $A$ in the matrix of $\rho$, Eqs.~(\ref{kaction}) and (\ref{schaction}) immediately yield the conditions
\begin{align}
\beta&\ge\alpha;\nonumber\\
c_\sigma&\ge c_\rho.\label{condg}
\end{align}
These conditions lead to the following theorem, stated in the main text with a discussion of the physical significance of the quantities involved.
\begin{athm}[Theorem~\ref{thgp} of the main text]
The quantities
\be\nu_\mathrm I(\rho):=1-\alpha\ee
and
\be\nu_\mathrm C:=1-c_\rho\ee
are \emph{monotonically non-increasing} under Gibbs-preserving operations.
\end{athm}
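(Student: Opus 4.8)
The plan is to reduce the claimed monotonicity to the two scalar inequalities assembled in Eq.~(\ref{condg}), $\beta \ge \alpha$ and $c_\sigma \ge c_\rho$, and then to observe that $\nu_\mathrm I$ and $\nu_\mathrm C$ are affine, strictly decreasing functions of $\alpha$ and $c_\rho$ respectively. Writing the output state as $\sigma \equiv (\beta,\vc y,B)$, we have $\nu_\mathrm I(\sigma) = 1-\beta$ and $\nu_\mathrm C(\sigma) = 1-c_\sigma$, so $\beta \ge \alpha$ and $c_\sigma \ge c_\rho$ translate at once into $\nu_\mathrm I(\sigma) \le \nu_\mathrm I(\rho)$ and $\nu_\mathrm C(\sigma) \le \nu_\mathrm C(\rho)$. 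The entire content therefore lies in justifying those two inequalities, and both rest on a single structural fact: the positive-semidefiniteness of the block $A$ in the canonical parametrization of $\rho$.

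First I would dispose of $\nu_\mathrm I$. From the general Kraus form of a low-temperature Gibbs-preserving channel and the induced action on the ground-state population recorded in Eq.~(\ref{kaction}), one reads off $\beta - \alpha = \sum_i \vc v_i^\dagger A \vc v_i$. Since $A \ge 0$, each summand $\vc v_i^\dagger A \vc v_i$ is nonnegative, hence $\beta \ge \alpha$ and $\nu_\mathrm I$ is non-increasing. Once the Kraus structure is in hand this step is immediate.

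The quantity $\nu_\mathrm C$ is handled in parallel but requires passing to the Schur-complement picture of Supplementary Section~\ref{schur}. The idea is to conjugate the channel by the CP, invertible Schur-complement maps, forming $\Lambda_\cE := \cE_{B,\vc y}\circ\cE\circ\cE_{A,-\vc x}$, which carries the block-diagonal matrix $D_\rho$ to $D_\sigma$. Because each $K_i$ is block-upper-triangular---its lower-left block vanishing, as forced by the constraint $\cE(\proj{E_1}) = \proj{E_1}$---and the Schur-complement matrices $K_{A,\vc x}$, $K_{B,\vc y}$ share this structure, the conjugated Kraus operators $J_i = K_{B,\vc y}K_iK_{A,-\vc x}$ retain the same block-upper-triangular form, with only their upper-right blocks changing from $\vc v_i^\dagger$ to some $\vc u_i^\dagger$. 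Extracting the top-left scalar then gives Eq.~(\ref{schaction}), $c_\sigma - c_\rho = \sum_i \vc u_i^\dagger A \vc u_i \ge 0$, again by $A \ge 0$, so $\nu_\mathrm C$ is non-increasing as well.

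I expect the main obstacle to be precisely the verification that the conjugated operators $J_i$ preserve the block-upper-triangular structure, so that $\Lambda_\cE$ acts block-diagonally on $D_\rho$ and the single Schur-complement scalar can be isolated as in Eq.~(\ref{schaction}); this is the one place where a genuine computation---multiplying out the three triangular factors---is required. A secondary subtlety is the singular case, where $A^{-1}$ fails to exist: there I would use the fact that positivity of $\rho$ forces the components of $\vc x$ on the kernel of $A$ to vanish, so that $\vc x^\dagger A^{-1}\vc x$, and hence $c_\rho$, remains well defined on the support of $A$, and the same conjugation argument carries through after restricting to that support.
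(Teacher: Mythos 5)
Your proposal is correct and follows essentially the same route as the paper: both derive $\beta\ge\alpha$ directly from Eq.~(\ref{kaction}) and $c_\sigma\ge c_\rho$ from the Schur-complement conjugation $\Lambda_\cE=\cE_{B,\vc y}\circ\cE\circ\cE_{A,-\vc x}$, using the block-upper-triangular Kraus structure forced by $\cE\left(\proj{E_1}\right)=\proj{E_1}$ and the positivity of $A$, with the singular case handled by restricting to the support of $A$ exactly as the paper does. No gaps.
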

The second monotone can be adapted to cases with singular $A$ using the line of reasoning presented at the end of the section on the Schur complement construction.

\subsection{Two-level systems and pure-state transitions}
The monotones mentioned in the previous section turn out to be sufficient in determining the feasibility of state transitions in some special cases:
\begin{prop}\label{twolg}
The conditions (\ref{condg}) are sufficient for state transitions on two-level systems, i.e., when $d=2$.
\end{prop}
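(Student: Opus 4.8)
The plan is to establish sufficiency by explicit construction: assuming the two inequalities \eqref{condg}, namely $\beta\ge\alpha$ and $c_\sigma\ge c_\rho$, I will exhibit a low-temperature Gibbs-preserving channel carrying $\rho\mapsto\sigma$. For $d=2$ the canonical parametrization $\rho\equiv(\alpha,\vc x,A)$ collapses to scalars: $A=1-\alpha$, a single complex $x$, and $c_\rho=\alpha-\abs{x}^2/A$; likewise $\sigma\equiv(\beta,y,B)$ with $B=1-\beta$ and $c_\sigma=\beta-\abs{y}^2/B$. Since the diagonal phase unitary $\mathrm{diag}(1,e^{i\phi})$ is a reversible Gibbs-preserving operation leaving $\alpha,A,\beta,B$ and the moduli $\abs x,\abs y$ untouched, I would first conjugate the input and output by such unitaries to assume, without loss of generality, that $x,y\ge0$ are real; this reduction is lossless precisely because \eqref{condg} depends on $x,y$ only through their moduli. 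The target then reads $\beta\ge\alpha$ together with $y^2\le B\!\left[(\beta-\alpha)+x^2/A\right]$.

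The construction uses two moves. First I would apply a Gibbs-preserving channel that installs the correct output diagonal $(\beta,B)$ while producing the \emph{maximal} admissible coherence, i.e.\ the output with $c_\sigma=c_\rho$. A convenient choice uses two upper-triangular Kraus operators,
\be K_1=\left(\begin{array}{cc}1&0\\0&b\end{array}\right),\qquad K_2=\left(\begin{array}{cc}0&c\\0&d\end{array}\right),\qquad b,c,d\ge0,\quad b^2+c^2+d^2=1,\ee
which is manifestly trace-preserving and fixes $\proj{E_1}$. Reading off the action \eqref{kaction} gives $\beta=\alpha+Ac^2$, $B=A(b^2+d^2)$, and $y=bx+Acd$. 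Fixing $c=\sqrt{(\beta-\alpha)/A}$ reproduces the target diagonal, and maximizing $y=bx+d(Ac)$ over the arc $b^2+d^2=B/A$ (by Cauchy--Schwarz, at $(b,d)\propto(x,Ac)$) yields $y_{\max}=\sqrt{(B/A)\bigl(x^2+A(\beta-\alpha)\bigr)}$, for which a one-line substitution gives $c_\sigma=\alpha-x^2/A=c_\rho$. Second, to dial the coherence down to the required value I would compose with a partial dephasing $\rho\mapsto p\,\rho+(1-p)Z\rho Z$, with $Z=\mathrm{diag}(1,-1)$ and $p\in[\tfrac12,1]$: another Gibbs-preserving operation (a convex mixture of the identity and a Gibbs-preserving unitary) that fixes both populations and rescales the off-diagonal by $2p-1\in[0,1]$. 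Since the target condition is exactly $y\le y_{\max}$, some value of $p$ lands on the desired $y$, completing the transition.

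The routine checks are that $K_1,K_2$ obey the completeness relation and preserve $\proj{E_1}$ (both immediate from upper-triangularity and $b^2+c^2+d^2=1$), that the attainable off-diagonal sweeps the full interval $[0,y_{\max}]$, and the arithmetic identity $c_\sigma(y_{\max})=c_\rho$. The one genuinely conceptual point---and the reason the statement is worth proving---is that this target is \emph{not} reachable by cooling maps, which by the Cauchy--Schwarz bound \eqref{cbnd} are confined to $c_\sigma\ge c_\rho+(\beta-\alpha)$. The extra reach comes entirely from the cross term $Acd$ in $y$, contributed by the genuinely non-cooling operator $K_2$ (which transfers population $2\to1$ while coherently feeding the off-diagonal); this term lets coherence survive the population flow well enough to hold $c_\sigma$ all the way down at $c_\rho$, filling exactly the region Fig.~\ref{figc} shows to be forbidden to cooling maps.

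The main obstacle is thus conceptual rather than computational: sufficiency hinges on recognizing that one must attain the \emph{minimal} Schur complement $c_\sigma=c_\rho$, and the substantive work is verifying that the two-parameter freedom $(b,d)$ on the circle $b^2+d^2=B/A$ suffices to realize $y_{\max}$ for every admissible $(\alpha,x,\beta)$, after which post-dephasing handles all smaller coherences. Only minor bookkeeping remains, for the degenerate cases $A=0$ (where $\rho=\proj{E_1}$ forces $\sigma=\rho$) and $x=y=0$, and for the phase-reduction step. Necessity of \eqref{condg} is already delivered by Theorem~\ref{thgp}, so sufficiency and necessity together pin down the state-transition conditions for $d=2$.
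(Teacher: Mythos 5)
Your construction is correct, and it verifies: the channel with Kraus operators $K_1=\mathrm{diag}(1,b)$, $K_2=\left(\begin{smallmatrix}0&c\\0&d\end{smallmatrix}\right)$, $b^2+c^2+d^2=1$, is trace-preserving and fixes $\proj{E_1}$; with $c=\sqrt{(\beta-\alpha)/A}$ the diagonal comes out right, the Cauchy--Schwarz maximizer $(b,d)\propto(x,Ac)$ gives $y_{\max}^2=B\left[(\beta-\alpha)+x^2/A\right]$, hence $c_\sigma=c_\rho$, and post-composition with $\rho\mapsto p\rho+(1-p)Z\rho Z$ sweeps all $y\in[0,y_{\max}]$. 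The route is, however, organized differently from the paper's. The paper parametrizes an \emph{arbitrary} Gibbs-preserving channel by Kraus data $(\vc\eta,\vc v,\vc\lambda)$ subject to the trace-preserving constraints, notes that $\beta$ depends only on $\nrm{\vc\lambda}$, and then sweeps $c_\sigma$ continuously over $[c_\rho,\beta]$ by rotating the \emph{direction} of $\vc\lambda$ from parallel to orthogonal to $x\vc\eta+(1-\alpha)\vc v$, all within a single channel family (and with complex $x$ handled directly, no phase-reduction step needed). Your two-operator channel is exactly the paper's parallel choice instantiated with the minimal decomposition $\vc\eta=(1,0)^T$, $\vc v=(0,c)^T$, $\vc\lambda=(b,d)^T$; but where the paper interpolates by rotating $\vc\lambda$, you interpolate by composing the extremal channel with partial dephasing, a convexity/composition argument. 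What each buys: your version is more concrete and easier to check (explicit $2\times2$ matrices, a one-line optimization, and a clean separation of ``reach the extreme point'' from ``interpolate''), and your closing observation that the operator $K_2$ is of non-cooling form --- so that the region $c_\rho\le c_\sigma<c_\rho+(\beta-\alpha)$ is exactly what Gibbs-preserving maps gain over cooling maps --- is a correct and worthwhile gloss matching the paper's Fig.~\ref{figc}; the paper's version establishes directly that \emph{every} admissible pair $(\beta,c_\sigma)$ arises already from one parametrized family, without needing the dephasing trick or the reduction to real $x,y$. Both proofs hinge on the same key step: Cauchy--Schwarz alignment to attain the minimal Schur complement $c_\sigma=c_\rho$.
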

\begin{proof}
In this case $A\equiv1-\alpha$ and $\vc x\equiv x$ are scalars. Similarly, among the parameters characterizing a channel $\cE$, $\vc v_i\equiv v_i$ and $L_i\equiv\lambda_i$ are now scalars. For convenience we can define the following vectors:
\begin{align}
\vc\eta&\equiv(\eta_1\dots\eta_r)^T;\nonumber\\
\vc v&\equiv(v_1\dots v_r)^T;\nonumber\\
\vc\lambda&\equiv(\lambda_1\dots\lambda_r)^T.\nonumber
\end{align}
The TP condition Eqs.~(\ref{krtp}) can now be written elegantly:
\begin{align}
\nrm{\vc\eta}&=1;\nonumber\\
\left\langle\vc\eta,\vc v\right\rangle&=0;\nonumber\\
\nrm{\vc v}^2+\nrm{\vc\lambda}^2&=1,\label{krtp2}
\end{align}
where $\langle\cdot,\cdot\rangle$ and $\nrm\cdot$ are the usual inner product and its associated geometric norm in this vector space. Under $\cE$ [cf. Eqs.~(\ref{kaction})], the component $\alpha$ transforms into
\begin{align}\label{bett}
\beta&=\alpha+\nrm{\vc v}^2(1-\alpha)\nonumber\\
&=\alpha+\left(1-\nrm{\vc\lambda}^2\right)(1-\alpha).
\end{align}
The range of values that $\beta$ can take under the conditions (\ref{condg}) is $[\alpha,1]$, and we can always choose a $\vc\lambda$ that achieves any of these values while also obeying Eq.~(\ref{krtp2}). It remains to be shown that any of the values of $c_\sigma$ allowed by (\ref{condg}) can also be achieved simultaneously.

Working out the action of the channel $\Lambda_\cE$ using the analysis that led to Eq.~(\ref{schaction}), we find that
\begin{align}
c_\sigma&=\alpha+(1-\alpha)\nrm{\vc v}^2-\fr{\abs{\left\langle\vc\lambda,\left(x\vc\eta+(1-\alpha)\vc v\right)\right\rangle}^2}{(1-\alpha)\nrm{\vc\lambda}^2}\nonumber\\
&=:f[\alpha,x,\vc v,\vc\lambda,\vc\eta].\nonumber
\end{align}
The smallest value that $c_\sigma$ can take under (\ref{condg}) is $c_\rho$. Since only the norm $\nrm{\vc\lambda}$ is relevant in achieving the requisite value of $\beta$ [See Eq.~(\ref{bett})], we are free to choose $\vc\lambda$ parallel to $\left(x\vc\eta+(1-\alpha)\vc v\right)$, so that
\be\abs{\left\langle\vc\lambda,\left(x\vc\eta+(1-\alpha)\vc v\right)\right\rangle}^2=\left(\nrm{\vc\lambda}\nrm{x\vc\eta+(1-\alpha)\vc v}\right)^2.\ee
But since $\left\langle\vc\eta,\vc v\right\rangle=0$, we have ``Pythagoras' theorem":
\be\nrm{x\vc\eta+(1-\alpha)\vc v}^2=\abs x^2\nrm{\vc\eta}^2+(1-\alpha)^2\nrm{\vc v}^2.\ee
This, combined with Eq.~(\ref{krtp2}), gives us
\be f[\alpha,x,\vc v,\vc\lambda,\vc\eta]=c_\rho.\ee
This shows that the least possible value of $c_\sigma$ can be achieved. The largest possible value of $c_\sigma$ is $\beta$. This can be achieved by choosing $\vc\lambda$ to be \emph{orthogonal} to $\left(x\vc\eta+(1-\alpha)\vc v\right)$, again without affecting the ability to achieve the desired $\beta$.

To achieve any intermediate value of $c_\sigma$, we can choose $\vc\lambda$ to have an intermediate direction.
\end{proof}
\begin{coro}\label{gpure}
The conditions (\ref{condg}) are sufficient when $\rho$ and $\sigma$ are both pure.
\end{coro}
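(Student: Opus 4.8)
The plan is to reduce the pure-to-pure case to the two-level result already established in Proposition~\ref{twolg}, after first showing that for pure states the coherence condition $c_\sigma\ge c_\rho$ in~(\ref{condg}) is essentially vacuous, so that the only substantive requirement is $\beta\ge\alpha$.

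First I would evaluate the Schur complement $c_\rho$ on a pure state. Writing $\ket\psi=a\ket{E_1}+\ket\chi$ with $\ket\chi$ in the orthogonal complement of $\ket{E_1}$, the canonical parametrization gives $\alpha=\abs a^2$, $\vc x=a^*\vc b$ and $A=\vc b\vc b^\dagger$, where $\vc b$ is the coordinate vector of $\ket\chi$. A short computation (using the singular-$A$ prescription of the Schur-complement section, since $A=\vc b\vc b^\dagger$ is rank one) gives $\vc x^\dagger A^{-1}\vc x=\abs a^2$ whenever $\vc b\ne\vc 0$, so that $c_\rho=\alpha-\abs a^2=0$; the only exception is the ground state $\vc b=\vc 0$, for which $\alpha=1$ and $c_\rho=1$. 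Consequently, for pure $\rho,\sigma$ the conditions~(\ref{condg}) collapse to $\beta\ge\alpha$: in the generic case $c_\rho=0$ makes $c_\sigma\ge c_\rho$ automatic, while in the degenerate case $\beta\ge\alpha=1$ already forces $\sigma=\proj{E_1}=\rho$ and hence $c_\sigma=c_\rho$.

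Next I would exploit the reversible block unitaries that fix $\proj{E_1}$ (of the form $\eins_1\oplus V$ on $\cS$) to bring both states into the common two-level subspace $\mathrm{span}\{\ket{E_1},\ket{E_2}\}$. Such a $V$ can rotate the orthogonal part $\ket\chi$ of $\ket\psi$ onto $\ket{E_2}$ and fix the relative phase, taking $\rho$ to $\rho'=\proj{\psi'}$ with $\ket{\psi'}=\sqrt\alpha\ket{E_1}+\sqrt{1-\alpha}\ket{E_2}$, and similarly $\sigma$ to a state $\sigma'$ supported on the same two-level subspace. On this subspace the conditions~(\ref{condg}) read $\beta\ge\alpha$ and $0\ge0$, both of which hold, so Proposition~\ref{twolg} supplies a two-level Gibbs-preserving channel $\cE_2$ with $\cE_2(\rho')=\sigma'$.

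The main obstacle is lifting $\cE_2$ to a genuine $d$-dimensional low-temperature Gibbs-preserving operation without spoiling the transition. The plan is to embed the $2\times2$ Kraus operators of $\cE_2$ into the top-left corner of $d\times d$ matrices $\tilde K_i$ (zeros elsewhere) and to complete the set with $\tilde J_j:=\ket{E_1}\bra{E_j}$ for $j\ge3$. I would then verify that $\sum_i\tilde K_i^\dagger\tilde K_i+\sum_{j\ge3}\tilde J_j^\dagger\tilde J_j=\eins_d$, so the extension $\tilde\cE$ is trace-preserving; that the $\tilde J_j$ annihilate $\ket{E_1}$, so $\tilde\cE(\proj{E_1})=\proj{E_1}$ and $\tilde\cE$ is Gibbs-preserving; and that, because $\rho'$ is supported on $\mathrm{span}\{\ket{E_1},\ket{E_2}\}$, the completing operators contribute nothing and $\tilde\cE(\rho')=\sigma'$. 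Composing $\tilde\cE$ with the two fixing unitaries before and after then realizes $\rho\mapsto\sigma$ as a composition of Gibbs-preserving maps. I expect this embedding/completion step, together with the careful handling of the singular Schur complement, to require the most attention; the remaining reductions are routine.
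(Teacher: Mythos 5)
Your proof is correct and follows essentially the same route as the paper's: conjugate both pure states by allowed block unitaries of the form $\eins_1\oplus V$ to bring them into $\mathrm{span}\{\ket{E_1},\ket{E_2}\}$, then invoke Proposition~\ref{twolg}. The extra work you do --- computing that $c_\rho=0$ for every non-ground pure state and explicitly lifting the two-level channel to $d$ dimensions with the completing Kraus operators $\ket{E_1}\bra{E_j}$, $j\ge3$ --- correctly fills in steps the paper's terser proof leaves implicit.
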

\begin{proof}
Consider a pure state
\be\ket\psi=t\ket{E_1}+\left(1-\abs t^2\right)^{1/2}\ket\phi,\ee
where $\ket\phi$ is a normalized vector such that $\bra\phi E_1\rangle=0$.
Using a unitary operation of the form
\be U=\left(\begin{array}{c|c}1&0\\\hline0&V\end{array}\right),\ee
which is allowed under Gibbs-preserving operations, we can always reversibly transform $\ket\psi$ to a state of the form
\be\ket{\tilde\psi}=t\ket{E_1}+\left(1-\abs t^2\right)^{1/2}\ket{E_2}.\ee
Therefore, every state transition question involving a pair of $d$-dimensional pure states can be reduced to one involving pure states in the $2$-dimensional subspace spanned by $\{\ket{E_1},\ket{E_2}\}$. By Supplementary Proposition~\ref{twolg}, the claim follows.
\end{proof}

\end{document}